\documentclass[10pt]{article}
\usepackage{amsmath,amsthm,color}
\usepackage{amssymb}

\newtheorem{theorem}{Theorem}
\newtheorem{lemma}[theorem]{Lemma}
\newtheorem{proposition}[theorem]{Proposition}

\newtheorem*{claim}{Claim}

\usepackage{mathabx}
\makeatletter
\newcommand\incircbin
{%
  \mathpalette\@incircbin
}
\newcommand\@incircbin[2]
{%
  \mathbin%
  {%
    \ooalign{\hidewidth$#1#2$\hidewidth\crcr$#1\ovoid$}%
  }%
}
\newcommand{\oeq}{\incircbin{=}}
\makeatother

\newcommand{\RR}{\mathbb{R}}

\newcommand{\ip}[1]{\left\langle #1 \right\rangle}
\newcommand{\Etilde}{\widetilde{\EE}}
\DeclareMathOperator*{\argmax}{\mathrm{argmax}}
\DeclareMathOperator*{\argmin}{\mathrm{argmin}}

\newcommand{\diag}{\mathrm{diag}}
\newcommand{\tr}{\mathrm{tr}}
\newcommand{\var}{\mathrm{Var}}
\newcommand{\one}{\mathbf{1}}
\newcommand{\bA}{\mathbf{A}}
\newcommand{\bT}{\mathbf{T}}
\newcommand{\bW}{\mathbf{W}}
\newcommand{\EE}{\mathbb{E}}

\newcommand{\cA}{\mathcal{A}}
\newcommand{\cB}{\mathcal{B}}

\newcommand{\cF}{\mathcal{F}}
\newcommand{\cE}{\mathcal{E}}
\newcommand{\cL}{\mathcal{L}}
\newcommand{\cU}{\mathcal{U}}
\newcommand{\cV}{\mathcal{V}}

\newcommand{\ceil}[1]{\lceil #1 \rceil}

\definecolor{NYUpurple}{rgb}{0.34,0.02,0.55}


\title{Community Detection in Hypergraphs, Spiked Tensor Models, and Sum-of-Squares}

\author{
Chiheon Kim\thanks{MIT, Department of Mathematics, 77 Massachusetts Ave., Cambridge, MA 02139}\\
\texttt{chiheonk@math.mit.edu}
\and
Afonso S. Bandeira\thanks{Department of Mathematics and Center for Data Science, Courant Institute of Mathematical Sciences, New York University, New York, NY 10012. Part of this work was done while A. S. Bandeira was with the Mathematics Department at MIT and supported by NSF Grant DMS-1317308.}\\
\texttt{bandeira@cims.nyu.edu}
\and
Michel X. Goemans\thanks{MIT, Department of Mathematics, Room 2-474, 77 Massachusetts Ave., Cambridge, MA 02139. Partially supported by ONR grants N00014-14-1-0072 and N00014-17-1-2177.}\\
\texttt{goemans@math.mit.edu}
}

\begin{document}

\maketitle
\begin{abstract}

We study the problem of community detection in hypergraphs under a stochastic block model. Similarly to how the stochastic block model in graphs suggests studying spiked random matrices, our model motivates investigating statistical and computational limits of exact recovery in a certain spiked tensor model. In contrast with the matrix case, the spiked model naturally arising from community detection in hypergraphs is different from the one arising in the so-called tensor Principal Component Analysis model. We investigate the effectiveness of algorithms in the Sum-of-Squares hierarchy on these models. Interestingly, our results suggest that these two apparently similar models exhibit significantly different computational to statistical gaps.

\end{abstract}

\section{Introduction}

Community detection is a central problem in many fields of science and engineering. It has received much attention for its various applications to sociological behaviours \cite{goldenberg2010survey,fortunato2010community,
newman2002random}, protein-to-protein interactions \cite{marcotte1999detecting, chen2006detecting}, DNA 3D conformation \cite{cabreros2016detecting}, recommendation systems \cite{linden2003amazon, sahebi2011community, wu2015clustering}, and more. In many networks with community structure one may expect that the groups of nodes within the same community are more densely connected. The \emph{stochastic block model} (SBM) \cite{holland1983stochastic} is arguably the simplest model that attempts to capture such community structure.

Under the SBM, each pair of nodes is connected randomly and independently with a probability decided by the community membership of the nodes. The SBM has received much attention for its sharp phase transition behaviours  \cite{mossel2013proof}, \cite{abbe2015community}, \cite{abbe2016exact}, computational versus information-theoretic gaps \cite{chen2016statistical}, \cite{abbe2015recovering}, and as a test bed for many algorithmic approaches including semidefinite programming \cite{abbe2016exact}, \cite{hajek2016achieving}, spectral methods \cite{massoulie2014community}, \cite{Vu2014} and belief-propagation \cite{abbe2015detection}. See \cite{abbe2016community} for a good survey on the subject. 

Let us illustrate a version of the SBM with two equal-sized communities. Let $y \in \{\pm 1\}^n$ be a vector indicating community membership of an even number $n$ of nodes and assume that the size of two communities are equal, i.e., $\one^T y = 0$. Let $p$ and $q$ be in $[0, 1]$ indicating the density of edges within and across communities, respectively. Under the model, a random graph $G$ is generated by connecting nodes $i$ and $j$ independently with probability $p$ if $y_i=y_j$ or with probability $q$ if $y_i \neq y_j$. Let $\widehat{y}$ be any estimator of $y$ given a sample $G$. We say that $\widehat{y}$ exactly recovers $y$ if $\widehat{y}$ is equal to $y$ or $-y$ with probability $1-o_n(1)$. In the asymptotic regime of $p = a\log n / n$ and $q = b \log n / n$ where $a > b$, it has been shown that the maximum likelihood estimator (MLE) recovers $y$ when $\sqrt{a}-\sqrt{b} > \sqrt{2}$ and the MLE fails when $\sqrt{a}-\sqrt{b} < \sqrt{2}$, showing a sharp phase transition behaviour \cite{abbe2016exact}. Moreover, it was subsequently shown in \cite{hajek2016achieving} and \cite{Bandeira2016laplacian} that the standard semidefinite programming (SDP) relaxation of the MLE achieves the optimal recovery threshold. 

A fruitful way of studying phase transitions and effectiveness of different algorithms for the stochastic block model is to consider a Gaussian analogue of the model \cite{javanmard2016phase}. Let $G$ be a graph generated by the SBM and let $A_G$ be the adjacency matrix of $G$. We have
$$
\EE A_G = \frac{p+q}{2} \one\one^T + \frac{p-q}{2} yy^T - pI.
$$
It is then useful to think $A_G$ as a perturbation of the signal $\EE A_G$ under centered noise. 

This motivates a model with Gaussian noise. Given a vector $y \in \{\pm 1\}^n$ with $\one^T y = 0$, a random matrix $T$ is generated as
$$
T_{ij} = y_i y_j + \sigma W_{ij}
$$
where $W_{ij} = W_{ji} \sim N(0,1)$ for all $i, j \in [n]$. This model is often referred to $\mathbb{Z}_2$-synchronization \cite{Bandeira2016laplacian,javanmard2016phase}. It is very closely related to the \emph{spiked Wigner model} (or spiked random matrix models in general) which has a rich mathematical theory \cite{feral2007largest,montanari2015semidefinite,
perry2016optimality}.


In many applications, however, nodes exhibit complex interactions that may not be well captured by pairwise interactions \cite{agarwal2005beyond, zhou2006learning}. One way to increase the descriptive ability of these models is to consider $k$-wise interactions, giving rise to generative models on random hypergraphs and tensors. Specifically, a hypergraphic version of the SBM was considered in \cite{ ghoshdastidar2014consistency,florescu2015spectral,angelini2015spectral}, and a version of the censored block model in \cite{ahn2016community}.

For the sake of exposition we restrict our attention to $k=4$ in the sequel. Most of our results however easily generalize to any $k$ and will be presented in a subsequent publication. In the remaining of this section, we will introduce a hypergraphic version of the SBM and its Gaussian analogue. 

\subsection{Hypergraphic SBM and its Gaussian analogue}

Here we describe a hypergraphic version of the stochastic block model. Let $y$ be a vector in $\{\pm 1\}^n$ such that $\one^T y = 0$. Let $p$ and $q$ be in $[0,1]$. Under the model, a random 4-uniform hypergraph $H$ on $n$ nodes is generated so that each $\{i,j,k,l\} \subseteq [n]$ is included in $E(H)$ independently with probability
\[
\begin{cases}
p & \text{ if $y_i=y_j=y_k=y_l$ (in the same community)} \\
q & \text{ otherwise (across communities).}
\end{cases}
\]
Let $\bA_H$, the adjacency tensor of $H$, be the 4-tensor given by
$$
(\bA_H)_{ijkl} = \begin{cases}
1 & \text{ if $\{i,j,k,l\} \in E(H)$} \\
0 & \text{ otherwise.}
\end{cases}
$$
Let $y^{\oeq 4}$ be the 4-tensor defined as
\[
y_{ijkl}^{\oeq 4} = \begin{cases}  1 & \text{ if } y_i=y_j=y_k=y_l \\ 0 &   \text{ otherwise,}\end{cases} 
\]
Since $y \in \{\pm 1\}^n$, we have
$$
y^{\oeq 4} =  \left(\frac{\one+y}{2}\right)^{\otimes 4} + \left(\frac{\one-y}{2}\right)^{\otimes 4}.
$$
Note that for any quadruple of distinct nodes $(i,j,k,l)$ we have
$$
(\EE \bA_H)_{ijkl} = (q\one^{\otimes 4} + (p-q)y^{\oeq 4})_{ijkl}.
$$

In the asymptotic regime of $p = a\log n/\binom{n-1}{3}$ and $q = b\log n /\binom{n-1}{3}$, there is a sharp information-theoretic threshold for exact recovery. Results regarding this hypergraphic stochastic block model will appear in subsequent publication. Here we will focus on the Gaussian counterpart.

Analogously to the relationship between the SBM and the spiked Wigner model, the hypergraphic version of the SBM suggests the following spiked tensor model:
\[
\bT = y^{\oeq 4} + \sigma \bW,
\]
where $\bW$ is a random 4-tensor with i.i.d. standard Gaussian entries. We note that here the noise tensor $\bW$ is not symmetric, unlike in the spiked Wigner model. This is not crucial: assuming $\bW$ to be a symmetric tensor will only scale $\sigma$ by $\sqrt{4!}$. 

\section{The Gaussian planted bisection model}

Given a sample $\bT$, our goal is to recover the hidden spike $y$ up to a global sign flip. Let $\widehat{y}$ be an estimator of $y$ computed from $\bT$. Let $p(\widehat{y};\sigma)$ be the probability that $\widehat{y}$ successfully recovers $y$. Since $\bW$ is Gaussian, $p(\widehat{y};\sigma)$ is maximized when
\[
\widehat{y} = \argmin_{x \in \{\pm 1\}^n: \one^T x = 0} \|\bT - x^{\oeq 4} \|_F^2,
\]
or equivalently $\widehat{y} = \argmax_x \ip{x^{\oeq 4}, \bT}$, the maximum-likelihood estimator (MLE) of $y$.

Let $f(x) = \ip{x^{\oeq 4}, \bT}$ and $\widehat{y}_{ML}$ be the MLE of $y$. By definition,
$$
p(\widehat{y}_{ML};\sigma) = \Pr \left(f(y) > \max_{\substack{x \in \{\pm 1\}\setminus \{y,-y\} \\ \one^T x = 0}} f(x) \right).
$$

\begin{theorem}
\label{thm:A}
Let $\epsilon > 0$ be a constant not depending on $n$. Then, as $n$ grows, $p(\widehat{y}_{ML};\sigma)$ converges to 1 if $\sigma < (1-\epsilon) \sigma^*$ and $p(\widehat{y}_{ML};\sigma)$ converges to 0 if $\sigma > (1+\epsilon) \sigma^*$, where
$$
\sigma^* = \sqrt{\frac{1}{8}} \cdot \frac{n^{3/2}}{\sqrt{\log n}}.
$$
\end{theorem}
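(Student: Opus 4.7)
I would reduce the problem to an explicit sum over a Hamming-distance parameter $t$, then carry out a first-moment union bound for the direct direction and a second-moment Paley-Zygmund for the converse.

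\textbf{Setup.} Put $m := n/2$, $P := \{i : y_i = +1\}$, $N := \{i : y_i = -1\}$, so $|P| = |N| = m$. Every $x \in \{\pm 1\}^n$ with $\one^T x = 0$ is parameterized by $t = t(x) \in \{0, 1, \ldots, m\}$, the number of $+1$-positions of $y$ flipped to $-1$ in $x$ (by balance, equal to the number of $-1$-to-$+1$ flips); the extremes $t = 0, m$ correspond to $x = y, -y$, and there are $\binom{m}{t}^2$ candidates at level $t$. Writing $f(x) - f(y) = -h(t(x)) + \sigma Z(x)$ with $Z(x) := \ip{x^{\oeq 4} - y^{\oeq 4}, \bW}$, a direct expansion using the identity for $y^{\oeq 4}$ in the paper (and its analog for $x$) gives
\[
h(t) \;=\; \ip{y^{\oeq 4} - x^{\oeq 4}, y^{\oeq 4}} \;=\; 2\bigl[m^4 - (m-t)^4 - t^4\bigr], \qquad \|y^{\oeq 4} - x^{\oeq 4}\|_F^2 \;=\; 2h(t),
\]
so $Z(x) \sim N(0, 2h(t))$. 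In particular $h(1) = n^3(1 - o(1))$, and concavity of $h$ on $[0, m]$ (with maximum $h(m/2) = 7m^4/4$) gives the uniform lower bound $h(t) \geq \tfrac{7}{2} m^3 \min(t, m - t)$.

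\textbf{Upper bound} ($\sigma < (1-\epsilon)\sigma^*$). The MLE fails iff $\sigma Z(x) > h(t(x))$ for some $x \notin \{y, -y\}$. The Gaussian Chernoff bound gives $\Pr(\sigma Z(x) > h(t)) \leq \exp(-h(t)/(4\sigma^2))$, so by union bound and the $t \leftrightarrow m - t$ symmetry
\[
\Pr(\text{MLE fails}) \;\leq\; 2\sum_{t=1}^{\lfloor m/2 \rfloor} \binom{m}{t}^2 \exp\!\left(-\frac{h(t)}{4\sigma^2}\right).
\]
At $\sigma^2 = (1-\epsilon)^2(\sigma^*)^2 = (1-\epsilon)^2 n^3/(8\log n)$ the critical $t = 1$ term is $m^2 \cdot n^{-2/(1-\epsilon)^2 + o(1)} = o(1)$. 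For $2 \leq t \leq m/2$, inserting $\log\binom{m}{t}^2 \leq 2t\log(em/t)$ and $h(t) \geq 8m^3 t(1 - O(t/m))$ (for $t \leq \sqrt{m}$) or the concavity bound otherwise makes each summand at most $n^{-ct}$ for some $c > 0$, giving a total of $o(1)$.

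\textbf{Lower bound} ($\sigma > (1+\epsilon)\sigma^*$). Let $S := |\{x : t(x) = 1,\ \sigma Z(x) > h(1)\}|$ count the single-swap competitors that beat $y$. The sharp Mills-ratio lower bound on the Gaussian tail yields $\Pr(\sigma Z(x) > h(1)) \geq c(\log n)^{-1/2} n^{-2/(1+\epsilon)^2}$, hence $\EE S \geq (m^2/4)(\log n)^{-1/2} n^{-2/(1+\epsilon)^2 - o(1)} \to \infty$. Paley-Zygmund, $\Pr(S \geq 1) \geq (\EE S)^2/\EE S^2$, reduces the problem to showing $\EE S^2 = (1 + o(1))(\EE S)^2$. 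Indexing single-swap competitors by pairs $(i, j) \in P \times N$ and computing $\mathrm{Cov}(Z_{ij}, Z_{i'j'})$ from the support overlap of $x^{(i,j),\oeq 4} - y^{\oeq 4}$ and $x^{(i',j'),\oeq 4} - y^{\oeq 4}$, I would split by $|\{i, j\} \cap \{i', j'\}|$: the dominant ``disjoint'' case ($\Theta(n^4)$ pairs) has covariance $O(n^2)$ against variance $\Theta(n^3)$, giving correlation $O(1/n)$; the $\Theta(n^3)$ ``single-shared'' pairs carry larger correlation but are combinatorially negligible; and the diagonal is absorbed into $\EE S$. A standard bivariate-Gaussian tail estimate then closes the argument.

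\textbf{Main obstacle.} The upper bound is essentially a routine union bound once the identities for $h(t)$ and $\|y^{\oeq 4} - x^{\oeq 4}\|_F^2$ are in hand; the work there is only in tracking constants. The real difficulty lies in the second-moment step: the case-by-case enumeration of $\mathrm{Cov}(Z_{ij}, Z_{i'j'})$ and the verification that correlated-Gaussian Paley-Zygmund still gives a second-moment ratio tending to $1$ is what pins down the sharp constant $\sqrt{1/8}$ in $\sigma^*$.
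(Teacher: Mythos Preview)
Your upper bound is fine and matches the paper's union-bound argument essentially verbatim.

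The lower bound, however, has a genuine gap. Your claim that the $\Theta(n^3)$ single-shared pairs are ``combinatorially negligible'' is false for small $\epsilon$. Using $\ip{a^{\oeq 4},b^{\oeq 4}} = n^4\phi(a^Tb/n)$ with $\phi(s)=\tfrac{1}{128}((1+s)^4+(1-s)^4)$, one computes for $x=x^{(i,j)}$ and $x'=x^{(i,j')}$ (so $x^Tx'=n-4$) that
\[
\mathrm{Cov}(Z_{ij},Z_{ij'}) \;=\; n^4\bigl[\phi(1)-\phi(1-4/n)\bigr] \;=\; \tfrac12\,\mathrm{Var}(Z_{ij}),
\]
so the correlation is \emph{exactly} $1/2$, not $o(1)$. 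For two standard Gaussians with correlation $\rho=1/2$ and threshold $t$ with $t^2\sim 4\log n/(1+\epsilon)^2$, the bivariate tail satisfies
\[
\frac{\Pr(G_1>t,\,G_2>t)}{p^2} \;\asymp\; \exp\Bigl(t^2\bigl(1-\tfrac{1}{1+\rho}\bigr)\Bigr) \;=\; \exp(t^2/3) \;=\; n^{\,4/(3(1+\epsilon)^2)+o(1)}.
\]
The single-shared contribution to $\EE S^2/(\EE S)^2$ is therefore of order $n^{-1}\cdot n^{4/(3(1+\epsilon)^2)}$, which \emph{diverges} whenever $(1+\epsilon)^2<4/3$, i.e.\ for all $\epsilon\lesssim 0.155$. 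Paley--Zygmund then gives nothing for the range of $\epsilon$ where the threshold is actually sharp.

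The paper circumvents this by exploiting the additive structure responsible for the $1/2$ correlation. Restricting to small index sets $H_A\subset P$, $H_B\subset N$ of size $n/\log^2 n$ and partitioning the entries of $\bW$ by how many coordinates lie in $H_A\cup H_B$, one shows that $Z_{ab}=Y_a+Y_b+R_{ab}$ where $\{Y_a\}_{a\in H_A}\cup\{Y_b\}_{b\in H_B}$ are i.i.d.\ Gaussians carrying essentially all the variance and $R_{ab}$ has variance $o(n^3)$. Then
\[
\max_{a,b}\,Z_{ab} \;\ge\; \max_a Y_a + \max_b Y_b - \max_{a,b}(-R_{ab}),
\]
and each max on the right is controlled by the standard i.i.d.\ Gaussian maximum (lower bound for the first two, union upper bound for the third). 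This decoupling is precisely what the second-moment method cannot see; once you have it, the sharp constant $\sqrt{1/8}$ falls out with no covariance bookkeeping at all.
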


Here we present a sketch of the proof while deferring the details to the appendix. Observe that $\widehat{y}_{ML}$ is not equal to $y$ if there exists $x \in \{\pm 1\}^n$ distinct from $y$ such that $\one^T x = 0$ and $f(x) \geq f(y)$. For each fixed $x$, the difference $f(x)-f(y)$ is equal to 
$$
\ip{y^{\oeq 4}, x^{\oeq 4} - y^{\oeq 4}} + \sigma \ip{\bW, x^{\oeq 4} - y^{\oeq 4}}
$$
which is a Gaussian random variable with mean $\ip{y^{\oeq 4}, x^{\oeq 4}-y^{\oeq 4}}$ and variance $\sigma^2 \|x^{\oeq 4}-y^{\oeq 4}\|_F^2$. By definition we have 
$$
\ip{x^{\oeq 4},y^{\oeq 4}} = \frac{1}{128}\left((\one^T \one + x^T y)^4 + (\one^T \one - x^T y)^4\right).
$$
Let $\phi(t) = \frac{1}{128}((1+t)^4+(1-t)^4)$. Then, $\ip{x^{\oeq 4}, y^{\oeq 4}} = n^4 \phi(x^T y/n)$ so
\begin{eqnarray*}
\ip{y^{\oeq 4}, x^{\oeq 4}-y^{\oeq 4}} &=& -n^4\left(\phi(1) - \phi(x^T y/n)\right), \\
\|x^{\oeq 4} - y^{\oeq 4}\|_F &=& n^2 \sqrt{2\phi(1)-2\phi(x^T y/n)}.
\end{eqnarray*}
Hence, $\Pr \left(f(x) - f(y) \geq 0\right)$ is equal to
\[
\Pr_{G \sim N(0,1)} \left(G \geq \frac{n^2}{\sigma \sqrt{2}} \cdot \sqrt{\phi(1)-\phi(x^T y/n)}\right).
\]
This probability is maximized when $x$ and $y$ differs by only two indices, that is, $x^T y = n - 4$. Indeed one can formally prove that the probability that $y$ maximizes $f(x)$ is dominated by the probability that $f(y) > f(x)$ for all $x$ with $x^T y = n-4$. By union bound and standard Gaussian tail bounds, the latter probability is at most
$$
\left(\frac{n}{2}\right)^2 \exp\left(-\frac{n^4}{4\sigma^2} \cdot \left(\phi(1)-\phi(1-4/n)\right)\right)
$$
which is approximately
$$
\exp\left(2\log n - \frac{n^3 \cdot \phi'(1)}{\sigma^2}\right) = \exp\left(2\log n - \frac{n^3}{4\sigma^2}\right),
$$
and it is $o_n(1)$ if $\sigma^2 < \frac{n^3}{8\log n}$ as in Theorem \ref{thm:A}. 

\section{Efficient recovery}

Before we address the Gaussian model, let us describe an algorithm for hypergraph partitioning. Let $H$ be a 4-uniform hypergraph on the vertex set $V = [n]$. Let $\bA_H$ be the adjacency tensor of $H$. Then the problem can be formulated as
\[
\max \ip{\bA_H, x^{\oeq 4}} \text{ subject to } x \in \{\pm 1\}^n, \one^T x = 0.
\]
One approach for finding a partition is to consider the multigraph realization of $H$, which appears in \cite{ghoshdastidar2014consistency,florescu2015spectral} in different terminology. Let $G$ be the multigraph on $V=[n]$ such that the multiplicity of edge $\{i,j\}$ is the number of hyperedges $e \in E(H)$ containing $\{i,j\}$. One may visualize it as substituting each hyperedge by a 4-clique. Now, one may attempt to solve the reduced problem
$$
\max \ip{A_G, xx^T} \text{ subject to } x\in \{\pm 1\}^n, \one^T x = 0
$$
which is unfortunately NP-hard in general. Instead, we consider the semidefinite programming (SDP) relaxation
\[
\begin{aligned}
& & \max \quad & \ip{A_G, X}\\
& & \text{ subject to} \quad & X_{ii}=1 \text{ for all $i \in [n]$}, \\
& & &  \ip{X, \one \one^T} = 0, \\
& & & X\succeq 0.
\end{aligned}
\]
When $A_G$ is generated under the graph stochastic block model, this algorithm recovers the hidden partition down to optimum parameters. On the other hand, it achieves recovery to nearly-optimum parameters when $G$ is the multigraph corresponding to the hypergraph generated by a hypergraphic stochastic block model: there is a constant multiplicative gap between the guarantee and the information-theoretic limit. This will be treated in a future publication.

Here we had two stages in the algorithm: (1) ``truncating'' the hypergraph down to a multigraph, and (2) relaxing the optimization problem on the truncated objective function.

Now let us return to the Gaussian model $\bT = y^{\oeq 4} + \sigma \bW$. Let $f(x) = \ip{x^{\oeq 4}, \bT}$. Our goal is to find the maximizer of $f(x)$. Note that 
\[
f(x) = \sum_{i_1,\cdots,i_4 \in [n]} \bT_{i_1 i_2 i_3 i_4} \cdot \frac{1}{16} \left(\prod_{s=1}^4 (1+x_{i_s}) + \prod_{s=1}^4 (1-x_{i_s})\right)
\]
so $f(x)$ is a polynomial of degree 4 in variables $x_1,\dotsc,x_n$. Let $f_{(2)}(x)$ be the degree 2 truncation of $f(x)$, i.e.,
$$
f_{(2)}(x) = \frac{1}{8} \sum_{i_1,\cdots,i_4 \in [n]} \bT_{i_1 i_2 i_3 i_4} \left(\sum_{1\leq s < t \leq 4} x_{i_s}x_{i_t}\right).
$$
Here we have ignored the constant term of $f(x)$ since it does not affect the maximizer. For each $\{s < t\}\subseteq \{1,2,3,4\}$, let $Q^{st}$ be $n$ by $n$ matrix where
$$
Q^{st}_{ij} = \sum_{\substack{(i_1,\cdots,i_4)\in [n]^4 \\ i_s = i, i_t = j}} \bT_{i_1 i_2 i_3 i_4}.
$$
Then,
$$
f_{(2)}(x) = \frac{1}{8} \ip{Q, xx^T}
$$
where $Q = \sum_{1\leq s<t\leq 4} Q^{st}$. This $Q$ is analogous to the adjacency matrix of the multigraph constructed above. It is now natural to consider the following SDP:
\begin{equation}
\label{eqn:sdp}
\begin{aligned}
&&\max \quad& \ip{Q, X}\\
&&\text{subject to}\quad& X_{ii}=1 \text{ for all $i\in [n]$},\\
&&& \ip{X, \one\one^T} = 0,\\
&&& X \succeq 0.
\end{aligned}
\end{equation}

\begin{theorem}
\label{thm:B}
Let $\epsilon > 0$ be a constant not depending on $n$. Let $\widehat{Y}$ be a solution of (\ref{eqn:sdp}) and $p(\widehat{Y};\sigma)$ be the probability that $\widehat{Y}$ coincide with $yy^T$. If $\sigma < (1-\epsilon) \sigma^*_{(2)}$ where
$$
\sigma^*_{(2)} = \sqrt{\frac{3}{32}} \cdot \frac{n^{3/2}}{\sqrt{\log n}} = \sqrt{\frac{3}{4}} \cdot \sigma^*,
$$
then $p(\widehat{Y};\sigma) = 1-o_n(1)$. 
\end{theorem}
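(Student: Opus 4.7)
The plan is a standard SDP dual-certificate argument: I will exhibit Lagrange multipliers certifying that $\widehat Y = yy^T$ is the unique optimum of \eqref{eqn:sdp}. Since the primal variable is symmetric, I may replace $Q$ by its symmetric part without changing the SDP value. The KKT/complementary-slackness conditions then amount to finding a diagonal matrix $D$ and a scalar $\lambda$ such that the slack matrix
\[
S \;:=\; D + \lambda \one\one^T - Q
\]
satisfies $Sy = 0$, $S \succeq 0$, and $\mathrm{rank}(S) = n-1$. The equation $Sy = 0$ together with $\one^T y = 0$ forces $d_i = y_i (Qy)_i$, so $\lambda$ is the only remaining free parameter; given the rank condition, $X = yy^T$ is the unique primal optimum by complementary slackness.

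A direct enumeration of $4$-tuples gives $\EE Q = \tfrac{3n^2}{4}(\one\one^T + yy^T)$, hence $\EE (Qy) = \tfrac{3n^3}{4}y$. Setting $\lambda := \tfrac{3n^2}{4}$ to cancel the $\one\one^T$ term, the expected slack becomes
\[
\EE S \;=\; \tfrac{3n^3}{4}I - \tfrac{3n^2}{4}\, yy^T,
\]
which is positive semidefinite with kernel $\mathrm{span}(y)$ and smallest positive eigenvalue $\tfrac{3n^3}{4}$. Writing $\Delta Q := Q - \EE Q$, the fluctuation is
\[
S_{\mathrm{noise}} \;:=\; \diag\bigl(y \circ (\Delta Q\, y)\bigr) - \Delta Q,
\]
which satisfies $S_{\mathrm{noise}}\, y = 0$ by construction, so it preserves $\{y\}^\perp$. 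The proof reduces to showing $\|S_{\mathrm{noise}}\|_{\mathrm{op}} < \tfrac{3n^3}{4}$ with probability $1-o(1)$.

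I split $\|S_{\mathrm{noise}}\|_{\mathrm{op}} \le \|\Delta Q\|_{\mathrm{op}} + \max_i |(\Delta Q\, y)_i|$. Each entry of $\Delta Q$ is a centered Gaussian of variance $O(\sigma^2 n^2)$, so standard Gaussian matrix concentration (matrix Bernstein, or a Bandeira--Van Handel bound adapted to the mild dependence) gives $\|\Delta Q\|_{\mathrm{op}} = O(\sigma n^{3/2})$ with high probability; in the regime $\sigma = O(n^{3/2}/\sqrt{\log n})$ this is $o(n^3)$ and hence a negligible lower-order term. Each coordinate $(\Delta Q\, y)_i$ is a centered Gaussian; expanding it as an explicit linear combination of $\bW$-entries and simplifying the resulting sum (using $\sum_{j\ne i} y_j = -y_i$) yields
\[
\var\bigl((\Delta Q\, y)_i\bigr) \;=\; 3\sigma^2 n^3\,(1 + O(1/n)).
\]
A Gaussian tail bound and a union bound over $i \in [n]$ then give $\max_i |(\Delta Q\, y)_i| \le \sqrt{6\sigma^2 n^3 \log n}\,(1+o(1))$ with probability $1-o(1)$. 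The inequality $\sqrt{6\sigma^2 n^3 \log n} < \tfrac{3n^3}{4}$ rearranges to $\sigma^2 < \tfrac{3n^3}{32\log n} = \sigma_{(2)}^{*2}$, so any $\sigma \le (1-\epsilon)\sigma^*_{(2)}$ leaves a constant positive margin and certifies both $S \succeq 0$ and $\mathrm{rank}(S) = n-1$.

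The main obstacle is the variance computation for $(\Delta Q\, y)_i$: obtaining the precise constant $3$ (rather than, say, $6$) is exactly what produces the sharp threshold constant $\sqrt{3/32}$, and it requires careful accounting of the overlaps among the six summands of $Q = \sum_{s<t} Q^{st}$ (including the symmetrization). By contrast, the operator-norm bound $\|\Delta Q\|_{\mathrm{op}} = O(\sigma n^{3/2})$ only needs to be $o(n^3)$ in this regime and is routine by comparison.
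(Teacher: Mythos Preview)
Your proof is correct and follows essentially the same dual-certificate argument as the paper: the paper phrases the construction via the Laplacian, writing $S=\diag(y)\bigl(\mathcal{L}(\diag(y)Q\diag(y))+\lambda yy^T\bigr)\diag(y)$ and then splitting $\|\mathcal{L}(\cdot)\|$ into a diagonal maximum plus the off-diagonal operator norm, exactly your decomposition $\max_i|(\Delta Q\,y)_i|+\|\Delta Q\|_{\mathrm{op}}$. Your explicit symmetrization when computing $\var\bigl((\Delta Q\,y)_i\bigr)=3\sigma^2 n^3(1+O(1/n))$ is precisely what produces the main-text constant $\sqrt{3/32}$.
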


We present a sketch of the proof where details are deferred to the appendix. We note that a similar idea was used in \cite{hajek2016achieving,Bandeira2016laplacian} for the graph stochastic block model.

We construct a dual solution of (\ref{eqn:sdp}) which is feasible with high probability, and certifies that $yy^T$ is the unique optimum solution for the primal (\ref{eqn:sdp}). By complementary slackness, such dual solution must be of the form $S := D_{Q'} - Q'$ where $Q' = \diag(y)Q\diag(y)$ and $D_{Q'}$ is $\mathrm{diag}(Q'\one)$. It remains to show that $S$ is positive semidefinite with high probability.

To show that $S$ is positive semidefinite, we claim that the second smallest eigenvalue of $\EE S$ is $\Theta(n^3)$ and the operator norm $\|S - \EE S\|$ is $O(\sigma n^{3/2} \sqrt{\log n})$ with high probability. The first part is just an easy calculation, and the second part is application of a nonasymptotic bound on Laplacian random matrices \cite{Bandeira2016laplacian}. Hence, $S$ is positive semidefinite with high probability if $\sigma n^{3/2} \sqrt{\log n} \lesssim n^3$, matching with the order of $\sigma_{(2)}^*$.

\section{Standard Spiked Tensor Model}

Montanari and Richard proposed a statistical model for tensor Principal Component Analysis \cite{Montanari2014}. In the model we observe a random tensor $\bT = v^{\otimes 4} + \sigma \bW$ where $v \in \RR^n$ is a vector with $\|v\| = \sqrt{n}$ (spike), $\bW$ is a random 4-tensor with i.i.d. standard Gaussian entries, and $\sigma \geq 0$ is the noise parameter. They showed a nearly-tight information-theoretic threshold for approximate recovery: when $\sigma \gg n^{3/2}$ then the recovery is information-theoretically impossible, while if $\sigma \ll n^{3/2}$ then the MLE gives a vector $v' \in \RR^n$ with $|v^T v'| = (1-o_n(1))n$ with high probability. Subsequently, sharp phase transitions for weak recovery, and strong and weak detection were shown in \cite{Perry2016,lesieur2017statistical}.

Those information-theoretic thresholds are achieved by the MLE for which no efficient algorithm is known. Montanari and Richard considered a simple spectral algorithm based on tensor unfolding, which is efficient in both theory and practice. They show that the algorithm finds a solution $v'$ with $|v^T v'| = (1-o_n(1))n$ with high probability as long as $\sigma = O(n)$ \cite{Montanari2014}. This is somewhat believed to be unimprovable using semidefinite programming \cite{Hopkins2015, Bhattiprolu2016}, or using approximate message passing algorithms \cite{lesieur2017statistical}. 

For clear comparison to the Gaussian planted bisection model, let us consider when spike is in the hypercube $\{\pm 1\}^n$. Let $y \in \{\pm 1\}^n$ and $\sigma \geq 0$. Given a tensor
$$
\bT = y^{\otimes 4} + \sigma \bW
$$
where $\bW$ is a random 4-tensor with independent, standard Gaussian entries, we would like to recover $y$ \emph{exactly}. The MLE is given by the maximizer of $g(x):= \ip{x^{\otimes 4}, \bT}$ over all vectors $x \in \{\pm 1\}^n$.

\begin{theorem}
Let $\epsilon > 0$ be any constant which does not depend on $n$. Let $\lambda^* = \frac{\sqrt{2} \cdot n^{3/2}}{\sqrt{\log n}}$. When $\sigma > (1+\epsilon)\lambda^*$, exact recovery is information-theoretically impossible (i.e. the MLE fails with $1-o_n(1)$ probability), while if $\sigma < (1-\epsilon)\lambda^*$ then the MLE recovers $y$ with $1-o_n(1)$ probability. 
\end{theorem}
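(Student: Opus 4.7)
My approach mirrors the sketch for Theorem \ref{thm:A}, replacing $x^{\oeq 4}$ by $x^{\otimes 4}$ and dropping the bisection constraint. Let $g(x) := \ip{x^{\otimes 4}, \bT}$; since $y^{\otimes 4} = (-y)^{\otimes 4}$, recovery means the maximizer of $g$ over $\{\pm 1\}^n$ lies in $\{y, -y\}$. For each $x \in \{\pm 1\}^n$, the difference $g(x) - g(y)$ is Gaussian with mean $(x^T y)^4 - n^4$ and variance $2\sigma^2(n^4 - (x^T y)^4)$. Parametrizing by the Hamming distance $k := d_H(x, y)$ (so $x^T y = n - 2k$),
\[
\Pr(g(x) \geq g(y)) \;=\; \Pr_{G \sim N(0,1)}\!\left(G \geq \frac{\sqrt{n^4 - (n - 2k)^4}}{\sigma \sqrt{2}}\right).
\]

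For the success direction ($\sigma < (1-\epsilon)\lambda^*$), this probability is maximized at $k = 1$ and $k = n-1$, corresponding to single-bit flips of $\pm y$. (In contrast with Theorem \ref{thm:A}, where the bisection constraint forced $k \geq 2$, which is precisely what accounts for the factor of $4$ between $\sigma^*$ and $\lambda^*$.) Since $n^4 - (n-2)^4 = 8n^3 + O(n^2)$, a union bound over these $2n$ strings yields a failure probability of order $n \exp\bigl(-(1+o(1)) \cdot 2n^3/\sigma^2\bigr)$, which is $o(1)$ because $\sigma^2 \leq (1-\epsilon)^2 \cdot 2n^3/\log n$. Contributions from $2 \leq k \leq n/2$ are then absorbed by a routine polynomial-times-exponential bound: the Gaussian-tail exponent $(n^4 - (n-2k)^4)/(4\sigma^2)$ scales at least linearly in $k$ and dominates $\log\binom{n}{k}$ under the hypothesis on $\sigma$.

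For the failure direction ($\sigma > (1+\epsilon)\lambda^*$), I would apply a second-moment / Paley--Zygmund argument to
\[
N := \big|\{x \in \{\pm 1\}^n : d_H(x, y) = 1 \text{ and } g(x) \geq g(y)\}\big|.
\]
The first moment $\EE[N] = n \cdot \Pr(g(x) \geq g(y)) \to \infty$ under the hypothesis. The crux is the second moment: for two distinct single-flips $x \neq x'$ of $y$, a direct calculation yields
\[
\ip{x^{\otimes 4} - y^{\otimes 4},\ x'^{\otimes 4} - y^{\otimes 4}} \;=\; (n-4)^4 - 2(n-2)^4 + n^4 \;=\; O(n^2),
\]
while $\|x^{\otimes 4} - y^{\otimes 4}\|_F^2 = 2(n^4 - (n-2)^4) = \Theta(n^3)$. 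Hence the correlation coefficient between $g(x) - g(y)$ and $g(x') - g(y)$ is $O(1/n)$, so the joint upper tail factorizes up to a $(1+o(1))$ factor, giving $\EE[N^2] \leq (1+o(1))(\EE N)^2 + \EE N$. Paley--Zygmund then yields $\Pr(N > 0) = 1 - o(1)$, so the MLE fails with high probability.

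The main obstacle I anticipate is controlling the approximate factorization of the bivariate Gaussian upper tail in the failure step: the correlation $\rho$ is $O(1/n)$ in absolute value, but the standardized threshold diverges like $\sqrt{2 \log n}$, so the error must be tracked quantitatively, e.g., via a conditional Gaussian computation confirming that the conditional mean shift $\rho \cdot \sqrt{2 \log n} \to 0$ in standardized units. The moderate-$k$ estimate in the success direction is mildly tedious but essentially routine.
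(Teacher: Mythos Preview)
Your success direction is the paper's argument. For the failure direction you take a genuinely different route. The paper's proof (the ``slight modification'' of Appendix~A) restricts to a subset $H\subset[n]$ of size $h=n/\log^2 n$ and, for each single flip $y^{(a)}$ with $a\in H$, decomposes $\langle\bW,(y^{(a)})^{\otimes4}-y^{\otimes4}\rangle=Y_a+Z_a$, where $Y_a$ collects the tensor entries with exactly one coordinate in $H$ (so that $\{Y_a\}_{a\in H}$ are i.i.d.\ Gaussians of variance $(1-o(1))\cdot 8n^3$) and $\var(Z_a)=o(n^3)$; then $\max_a(Y_a+Z_a)\ge\max_a Y_a-\max_a(-Z_a)$ and the lower tail for maxima of \emph{independent} Gaussians finishes the job. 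Your second-moment/Paley--Zygmund argument is also correct here and arguably more direct for single flips, since every off-diagonal pair has the \emph{same} correlation $\rho=\bigl[(n-4)^4-2(n-2)^4+n^4\bigr]/\bigl[2(n^4-(n-2)^4)\bigr]=3/n+O(1/n^2)$. One sharpening: the quantity that must vanish for the bivariate upper tail to factorize multiplicatively is $\rho t^2=O((\log n)/n)$, not merely the conditional mean shift $\rho t$; the log of the ratio $\Pr(G_2\ge t\mid G_1\ge t)/\Pr(G_2\ge t)$ is $\approx\rho t^2$, so checking only $\rho t\to0$ would in general be insufficient (though here both vanish). The paper's decomposition buys robustness to messier correlation structures---essential in Theorem~\ref{thm:A2}, where flips come in pairs $(a,b)$ and pairs sharing one index have non-vanishing correlation---while your approach avoids the auxiliary subset $H$ and the residual-variance bookkeeping.
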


The proof is just a slight modification of the proof of Theorem \ref{thm:A} which appears in the appendix. Note that both $\lambda^*$ and $\sigma^*$ are in the order of $n^{3/2}/\sqrt{\log n}$. The standard spiked tensor model and the Gaussian planted bisection model exhibit similar behaviour when unbounded computational resources are given.

\subsection{Sum-of-Squares based algorithms}

Here we briefly discuss Sum-of-Squares based relaxation algorithms. Given a polynomial $p \in \RR[x_1,\cdots,x_n]$, consider the problem of finding the maximum of $p(x)$ over $x \in \RR^n$ satisfying polynomial equalities $q_1(x)=0,\cdots,q_m(x)=0$. Most hard combinatorial optimization problems can be reduced into this form, including max-cut, $k$-colorability, and general constraint satisfaction problems. The \emph{Sum-of-Squares hierarchy} (SoS) is a systematic way to relax a polynomial optimization problem to a sequence of increasingly strong convex programs, each leading to a larger semidefinite program. See \cite{Barak2014} for a good exposition of the topic. 

There are many different ways to formulate the SoS hierarchy \cite{shor1987approach, parrilo2000structured,
nesterov2000squared, lasserre2001global}. Here we choose to follow the description based on \emph{pseudo-expectation functionals} \cite{Barak2014}. 

For illustration, let us recall the definition of $g(x)$: given a tensor $\bT = y^{\otimes 4} + \sigma \bW$, we define $g(x) = \ip{x^{\otimes 4}, \bT}$. Here $g(x)$ is a polynomial of degree 4, and the corresponding maximum-likelihood estimator is the maximizer
\begin{equation}
\label{eqn:polyopt}
\begin{aligned}
&&\max \quad& g(x) \\
&&\text{subject to}\quad& x_i^2 = 1 \text{ for all }i \in [n], \\
&&& \sum_{i=1}^n x_i = 0, x \in \RR^n.
\end{aligned}
\end{equation}
Let $\mu$ be a probability distribution over the set $\{x \in \{\pm 1\}^n : \one^T x = 0\}$. We can rewrite (\ref{eqn:polyopt}) as $\max_{\mu} \EE_{x\sim \mu} g(x)$ over such distributions. A linear functional $\widetilde{\EE}$ on $\RR[x]$ is called \emph{pseudoexpectation} of degree $2\ell$ if it satisfies $\widetilde{\EE}\, 1 = 1$ and $\widetilde{\EE}\, q(x)^2 \geq 0$ for any $q \in \RR[x]$ of degree at most $\ell$. We note that any expectation $\EE_{\mu}$ is a pseudoexpectation, but the converse is not true. So (\ref{eqn:polyopt}) can be relaxed to
\begin{equation}
\begin{aligned}
&&\max\quad& \widetilde{\EE}\, g(x) \\
&&\text{subject to}\quad& \widetilde{\EE}\text{ is a pseudoexpectation of degree $2\ell$,}\\
&&& \widetilde{\EE} \text{ is zero on $I$}
\end{aligned}
\end{equation}
where $I \subseteq \RR[x]$ is the ideal generated by $\sum_{i=1}^n x_i$ and $\{x_i^2 - 1\}_{i \in [n]}$. 

The space of pseudoexpectations is a convex set which can be described as an affine section of the semidefinite cone. As $\ell$ increases, this space gets smaller and in this particular case, it coincides with the set of true expectations when $\ell=n$. 

\subsection{SoS on spiked tensor models}

We would like to apply the SoS algorithm to the spiked tensor model. In particular, let us consider the degree 4 SoS relaxation of the maximum-likelihood problem. We note that for the spike tensor model with the spherical prior, it is known that neither algorithms using tensor unfolding \cite{Montanari2014}, the degree 4 SoS relaxation of the MLE \cite{Hopkins2015}, nor approximate message passing \cite{lesieur2017statistical} can achieve the statistical threshold, therefore the statistical-computational gap is believed to be present. Moreover, higher degree SoS relaxations were considered in \cite{Bhattiprolu2016}: for any small $\epsilon \geq 0$, degree at least $n^{\Omega(\epsilon)}$ is required in order to achieve recovery when $\sigma \approx n^{1+\epsilon}$.

We show an analogous gap of the degree 4 SoS relaxation for $\{\pm 1\}^n$ and $\one^T x=0$ prior.

\begin{theorem}
\label{thm:C}
Let $\bT = y^{\otimes 4} + \sigma \bW$ be as defined above. Let $g(x) = \ip{x^{\otimes 4}, \bT}$. If $\sigma \lesssim \frac{n}{\log^{\Theta(1)} n}$, then the degree 4 SoS relaxation of $\max_x g(x)$ gives the solution $y$, i.e., $\widetilde{\EE} g(x)$ is maximized when $\widetilde{\EE}$ is the expectation operator of the uniform distribution on $\{y, -y\}$.

On the other hand, if $\sigma \gtrsim n \log^{\Theta(1)} n$ then there exists a pseudoexpectation $\widetilde{\EE}$ of degree 4 on the hypercube $\{\pm 1\}^n$ satisfying $\one^T x = 0$ such that 
$$
g(y) < \max_{\widetilde{\EE}} \widetilde{\EE}\, g(x),
$$
so $y$ is not recovered via the degree 4 SoS relaxation.
\end{theorem}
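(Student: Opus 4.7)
The plan is to prove the two directions of Theorem~\ref{thm:C} separately: construct an SoS dual certificate in the recovery regime, and exhibit an explicit bad pseudo-expectation in the failure regime. In both halves, the starting point is the decomposition
\[
g(y) - g(x) \,=\, \bigl(n^4 - (x^\top y)^4\bigr) + \sigma \ip{y^{\otimes 4} - x^{\otimes 4}, \bW}.
\]
The key algebraic observation is that, modulo the hypercube ideal,
\[
n^2 - (x^\top y)^2 \,\equiv\, \sum_{i<j}(x_i y_j - x_j y_i)^2 \pmod{\{x_i^2 - 1\}},
\]
which is a degree-2 sum-of-squares polynomial vanishing only at $x = \pm y$. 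Together with the trivial decomposition of $n^2 + (x^\top y)^2$ as a sum of squares, this yields a degree-4 SoS certificate for the signal $n^4 - (x^\top y)^4$ that forms the backbone of both halves.

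\textbf{Recovery ($\sigma \lesssim n/\log^{\Theta(1)} n$).} I would absorb the random degree-4 polynomial $\sigma \ip{y^{\otimes 4} - x^{\otimes 4}, \bW}$ into the signal SoS certificate above by controlling it through a matrix flattening of $\bW$. The plan is to telescope $y^{\otimes 4} - x^{\otimes 4}$ into a sum of low-rank tensors with factors such as $x_i y_j - x_j y_i$ (these are exactly the SoS atoms appearing in the signal certificate), so that the contraction with $\bW$ reduces to a quadratic form controlled by the spectral norm of the $n^2 \times n^2$ flattening $\bW^{\fl}$. Standard non-asymptotic Gaussian matrix concentration gives $\|\bW^{\fl}\| \lesssim n \log^{\Theta(1)} n$ with high probability, which exactly matches the slack of the signal certificate in the prescribed regime. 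The resulting expression then admits a degree-4 SoS proof modulo the hypercube and slice ideals, witnessing that $\{\pm y\}$ is the unique SoS maximizer.

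\textbf{Failure ($\sigma \gtrsim n \log^{\Theta(1)} n$).} I would exhibit an explicit pseudo-expectation beating $y$. Let $v \in \RR^{n^2}$ be a top eigenvector of the appropriately symmetrized flattening $\bW^{\fl}$, with $\lambda_{\max}(\bW^{\fl}) = \Theta(n)$, and reshape $v$ to $V \in \RR^{n\times n}$. Let $M_0$ be the degree-4 moment tensor of the uniform distribution on $\{x \in \{\pm 1\}^n : \one^\top x = 0\}$, and let $T$ be the projection of $V \otimes V$ onto the subspace of symmetric 4-tensors satisfying the pseudo-moment versions of the hypercube diagonal constraint $T_{iikk}=0$ and the slice constraint $\sum_i T_{ijkl}=0$. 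The candidate pseudo-moment tensor is
\[
M \,=\, (1-\epsilon)\, M_0 \,+\, \epsilon\, n^2\, T.
\]
For $\epsilon = \Theta(1/\log^{\Theta(1)} n)$, the unfolding $M^{\fl}$ remains positive semidefinite and $M$ defines a feasible degree-4 pseudo-expectation on the hypercube slice; its correlation with $\bW$ satisfies $\ip{M, \bW} \gtrsim \epsilon \cdot n^2 \cdot \lambda_{\max}(\bW^{\fl}) \asymp n^3/\log^{\Theta(1)} n$. Combined with $\widetilde{\EE}(x^\top y)^4 = O(n^2)$ (dominated by the uniform-on-slice part) and the Gaussian tail bound $\ip{y^{\otimes 4}, \bW} = O(n^2 \sqrt{\log n})$, this gives $\widetilde{\EE} g(x) > g(y)$ whenever $\sigma \gtrsim n \log^{\Theta(1)} n$.

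\textbf{Main obstacle.} In both halves, the principal difficulty is translating operator-norm bounds on the flattening $\bW^{\fl}$ into statements compatible with the SoS framework, while respecting the hypercube ideal $\{x_i^2-1\}$ and the slice ideal $\{\sum_i x_i\}$ simultaneously. For the recovery direction, this means finding explicit polynomial multipliers of the constraints that absorb the random noise into the SoS cone without worsening the threshold by more than a polylog factor; a naive union-bound argument would yield only the MLE threshold $n^{3/2}/\sqrt{\log n}$, so a genuine SoS-compatible concentration statement is needed. For the failure direction, the analogous obstacle is verifying that the projection of $V \otimes V$ onto the feasible pseudo-moment cone (psd unfolding, symmetric 4-tensor, hypercube diagonal, slice ideal) preserves a constant fraction of its correlation with $\bW$ rather than being cancelled out, so that the $n$-scale top eigenvalue of $\bW^{\fl}$ still drives the objective.
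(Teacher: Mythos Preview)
Your recovery direction is essentially the paper's argument: the paper simply flattens $\bT$ to an $n^2\times n^2$ matrix $T=\tilde y\tilde y^T+\sigma W$ and invokes the known $\mathbb{Z}_2$-synchronization SDP result (Bandeira's Laplacian bound), which is exactly the dual of the SoS certificate you are trying to build by hand. So that half is fine, if somewhat over-engineered.

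The failure direction, however, has a real gap. Your construction sets $M=(1-\epsilon)M_0+\epsilon\,n^2 T$ with $T$ the projection of $V\otimes V$, $V$ the reshaped top eigenvector of $\bW^{\fl}$. The flattening $T^{\fl}$ is (a projection of) the rank-one matrix $vv^T$ and hence has operator norm of order $1$; projecting onto the symmetric/slice/diagonal constraints does not shrink this below a constant. Since $\lambda_{\min,\ne 0}(M_0^{\fl})=1-o(1)$, positive semidefiniteness of $M^{\fl}$ forces $\epsilon\,n^2\lesssim 1$, i.e.\ $\epsilon=O(n^{-2})$, not $\epsilon=\Theta(1/\mathrm{polylog}\,n)$ as you claim. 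With $\epsilon=O(n^{-2})$ your objective is $\ip{M,\bW}\lesssim \epsilon\,n^2\,\lambda_{\max}(\bW^{\fl})=O(n)$, which falls short of the required $n^3/\mathrm{polylog}\,n$ by a factor of $n^2$. In short, a single eigenvector of $\bW^{\fl}$ carries only $\Theta(n)$ worth of correlation, and the hypercube diagonal constraints (unlike the spherical trace constraint in Hopkins--Shi--Steurer) prevent you from scaling it up.

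The paper's construction is different in exactly this respect: instead of the top eigenvector it uses the \emph{entire} noise vector $w$, setting $\psi_1'=(\Pi-ee^T/e^Te)w$ and $\psi=\psi_0+\tfrac{\epsilon}{e^Tw}\psi_1'$. The correlation is then governed by a \emph{trace}, $\EE[w^T\Sigma^{1/2}(\Pi-ee^T/e^Te)w]=\tr(\Sigma^{1/2}(\Pi-ee^T/e^Te))\asymp n^4$, while the moment-matrix norm $\|X_{\psi_1'}\|$ is only $O(n\sqrt{\log n})$ by the matrix Gaussian inequality (this is where the Terwilliger-algebra block-diagonalization is used to compute $\|\Sigma_X\|$). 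Hence $\epsilon\asymp 1/(n\sqrt{\log n})$ suffices for PSD-ness, and the objective becomes $\epsilon\cdot n^4\asymp n^3/\sqrt{\log n}$. The factor-$n^2$ gain over your approach comes precisely from replacing ``top eigenvalue'' by ``trace'' in the correlation while keeping the perturbation's operator norm small via Gaussian concentration.
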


The proof of Theorem \ref{thm:C} is very similar to one that appears in \cite{Hopkins2015, Bhattiprolu2016}. In the proof it is crucial to observe that
$$
\frac{n^3}{\log^{\Theta(1)}n}\lesssim \max_{\substack{\widetilde{\EE}:\text{degree 4}\\ \text{ pseudo-exp.}}} \widetilde{\EE}\, \ip{\bW, x^{\otimes 4}} \lesssim n^3 \log^{\Theta(1)} n.
$$
The upper bound can be shown via Cauchy-Schwarz inequality for pseudoexpectations and the lower bound, considerably more involved, can be shown by constructing a pseudoexpectation $\widetilde{\EE}$ which is highly correlated to the entries of $\bW$. We refer the readers to the appendix for details.

\subsection{Comparison with the planted bisection model}

Here we summarize the statistical-computational thresholds of two models, the planted bisection model and the spiked tensor model.
\begin{itemize}
\item For the planted bisection model, there is a constant $c > 0$ not depending on $n$ such that for any $\epsilon > 0$ if $\sigma < (c-\epsilon) \frac{n^{3/2}}{\sqrt{\log n}}$ then recovery is information-theoretically possible, and if $\sigma > (c+\epsilon) \frac{n^{3/2}}{\sqrt{\log n}}$ then the recovery is impossible via any algorithm. Moreover, there is an efficient algorithm and a constant $c' < c$ such that the algorithm recovers $y$ with high probability when $\sigma < c' \frac{n^{3/2}}{\sqrt{\log n}}$. 

\item For the spiked tensor model, there is a constant $C > 0$ not depending on $n$ such that for any $\epsilon > 0$ if $\sigma < (C-\epsilon) \frac{n^{3/2}}{\sqrt{\log n}}$ then the recovery is information-theoretically possible, and if $\sigma > (C+\epsilon) \frac{n^{3/2}}{\sqrt{\log n}}$ then the recovery is impossible. In contrast to the planted bisection model, all efficient algorithms known so far only successfully recover $y$ in the asymptotic regime of $\sigma \lesssim n$. 
\end{itemize}

We note that the efficient, nearly-optimal recovery is achieved for the planted bisection model by ``forgetting'' higher moments of the data. On the other hand, such approach is unsuitable for the spiked tensor model since the signal $y^{\otimes 4}$ does not seem to be approximable by a non-trivial low-degree polynomial. This might shed light into an interesting phenomenon in average complexity, as in this case it seems to be crucial that second moments (or pairwise relations) carry information.

We further notice that the difference between the spiked tensor model and the tensor model with planted non-overlapping clusters in the \emph{detection} problem was studied in \cite{lesieur2017statistical}. They prove that when the mean of the prior is zero (which corresponds to the spiked tensor model), then there is a huge gap between the information-theoretic bound and approximate message passing algorithm, while when the mean is non-zero (which corresponds to the planted bisection model in this paper) the detection is possible in the same regime as the information-theoretic threshold. It leads us to ask whether this computational versus information-theoretic gap can be classified in terms of the ``complexity'' of the signal which generalizes (i) the minimum degree of nonzero Fourier coefficient in exact recovery, and (ii) the mean of the prior in detection.

All information-theoretic phase transition exhibited in the paper readily generalize to $k$-tensor models and this will be discussed in a future publication. In future work we will also investigate the performance of higher degree SoS algorithms for the planted bisection model.

\bibliographystyle{alpha}
\bibliography{mybib}

\appendix

\section{Proof of Theorem \ref{thm:A}}
\label{app:A}

In this section, we prove Theorem \ref{thm:A} for general $k$.

\begin{theorem}[Theorem \ref{thm:A}, general $k$]
\label{thm:A2}
Let $k$ be a positive integer with $k \geq 2$. Let $y \in \{\pm 1\}^n$ with $\one^T y = 0$ and $\bT$ be a $k$-tensor defined as $\bT = y^{\oeq k} + \sigma \bW$ where $\bW$ is a random $k$-tensor with i.i.d. standard Gaussian entries. Let $\widehat{y}_{ML}$ be the maximum-likelihood estimator of $y$, i.e.,
$$
\widehat{y}_{ML} = \argmax_{x \in \{\pm 1\}^n: \one^T x} \ip{\bT, x^{\oeq k}}.
$$
For any positive $\epsilon$,
\begin{itemize}
\item[(i)] $\widehat{y}_{ML}$ is equal to $y$ with probability $1-o_n(1)$ if $\sigma < (1-\epsilon)\sigma_*$, and
\item[(ii)] $\widehat{y}_{ML}$ is not equal to $y$ with probability $1-o_n(1)$ if $\sigma > (1+\epsilon)\sigma_*$
\end{itemize}
where 
$$
\sigma_* = \sqrt{\frac{k}{2^k}} \cdot \frac{n^{\frac{k-1}{2}}}{\sqrt{2\log n}}.
$$
\end{theorem}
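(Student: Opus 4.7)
\section*{Proof proposal for Theorem \ref{thm:A2}}

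The plan is to follow the sketch given for $k=4$ and to carry out the same analysis with a $k$-dependent polynomial $\phi$. Define
\[
\phi(t) = \frac{1}{2^{2k-1}}\bigl((1+t)^k + (1-t)^k\bigr),
\]
so that $\phi(1) = 2^{1-k}$ and $\phi'(1) = k/2^k$. Using $y^{\oeq k} = ((\one+y)/2)^{\otimes k} + ((\one-y)/2)^{\otimes k}$ together with $\one^T x = \one^T y = 0$, the inner product expands as
\[
\langle x^{\oeq k}, y^{\oeq k}\rangle = \frac{2}{4^k}\bigl((n + x^T y)^k + (n - x^T y)^k\bigr) = n^k \, \phi(x^T y / n).
\]
Since $\bW$ is Gaussian, for any fixed $x \in \{\pm 1\}^n$ with $\one^T x = 0$ the difference $f(x) - f(y)$ is Gaussian with mean $-n^k(\phi(1) - \phi(x^T y /n))$ and variance $2\sigma^2 n^k(\phi(1) - \phi(x^T y / n))$, so
\[
\Pr(f(x) \geq f(y)) = \Pr\!\left(G \geq \frac{n^{k/2}}{\sigma\sqrt{2}} \sqrt{\phi(1) - \phi(x^T y /n)}\right), \qquad G \sim N(0,1).
\]

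For the first (positive) direction, I would argue that this probability is maximized when $x$ differs from $y$ in exactly two coordinates (one swap between the two classes), in which case $x^T y = n-4$ and $\phi(1) - \phi(x^T y/n) = (4/n)\phi'(1) + O(n^{-2}) = 4k/(2^k n) + O(n^{-2})$. A union bound over the $(n/2)^2$ swap perturbations then gives failure probability at most
\[
\exp\!\left(2 \log n - \frac{k \, n^{k-1}}{2^k \sigma^2}(1 + o(1))\right),
\]
which is $o_n(1)$ once $\sigma^2 < (1-\epsilon)^2 \cdot k n^{k-1} / (2^{k+1}\log n) = (1-\epsilon)^2 \sigma_*^2$. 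To handle the non-swap contributions cleanly and show they really are negligible, I would partition the set of competitors $x$ according to the Hamming-type quantity $(n - x^T y)/2 \in \{2, 4, \ldots, n/2\}$, bound the count in each class by a binomial coefficient, and check that the exponent from the Gaussian tail dominates this entropy contribution for every class. This is a routine calculus check on $\phi$: the function $r \mapsto (\phi(1) - \phi(1 - 4r/n))/r$ is decreasing in $r$, so the swap class is extremal.

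For the converse (information-theoretic impossibility), I would run the second moment method on the counting random variable
\[
N = \#\{x : x \text{ obtained from } y \text{ by a single swap},\ f(x) \geq f(y)\}.
\]
Its expectation, by the same calculation as above, is $\gg 1$ when $\sigma > (1+\epsilon)\sigma_*$. The main obstacle, and the step that needs the most care, is controlling $\EE[N^2]$: the Gaussian events indexed by different swaps are correlated through the shared entries of $\bW$. I would compute $\mathrm{Cov}(\mathbf{1}_{f(x_1) \geq f(y)}, \mathbf{1}_{f(x_2) \geq f(y)})$ for two distinct swaps $x_1, x_2$ via the covariance of the Gaussians $f(x_i) - f(y)$, which is proportional to $\langle x_1^{\oeq k} - y^{\oeq k}, x_2^{\oeq k} - y^{\oeq k}\rangle$, again expressible through $\phi$. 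Since two swaps typically share only $O(1)$ indices, the covariance is of smaller order than the individual variances, so a standard pair-correlation Gaussian estimate (e.g.\ via Slepian/Gaussian comparison or a direct Mehler-type expansion) will yield $\EE[N^2] \leq (1+o(1))(\EE N)^2$. Paley–Zygmund then gives $\Pr(N \geq 1) = 1 - o_n(1)$, implying the MLE fails with high probability. The remainder of the argument is bookkeeping of constants to confirm the stated value of $\sigma_*$.
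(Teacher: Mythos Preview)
Your upper bound (part~(i)) is fine and essentially identical to the paper's: both partition the competitors by $r$ with $x^Ty=n-4r$, count $\binom{n/2}{r}^2$ such $x$, and use the Gaussian tail together with convexity of $\phi$ to show the $r=1$ term dominates.

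The gap is in part~(ii). Your claim that ``the covariance is of smaller order than the individual variances'' is false for the swaps that matter. If $x_1=y^{(a,b_1)}$ and $x_2=y^{(a,b_2)}$ share the index $a$ (and $b_1\neq b_2$), then $x_1^Tx_2=n-4$, so
\[
\langle x_1^{\oeq k}-y^{\oeq k},\,x_2^{\oeq k}-y^{\oeq k}\rangle
= n^k\bigl(\phi(1-4/n)-2\phi(1-4/n)+\phi(1)\bigr)
= n^k\bigl(\phi(1)-\phi(1-4/n)\bigr),
\]
which is exactly \emph{half} of $\|x_i^{\oeq k}-y^{\oeq k}\|_F^2$. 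Hence the Gaussian correlation is $\rho=1/2$, independent of $n$. There are $\Theta(n^3)$ ordered pairs of swaps sharing one index, and for bivariate standard normals with $\rho=1/2$ one has $\Pr(G_1>t,\,G_2>t)\asymp \exp(-2t^2/3)$. At $\sigma=(1+\epsilon)\sigma_*$ the threshold value is $t^2=(4/(1+\epsilon)^2)\log n$, so these pairs contribute $n^{\,3-\frac{8}{3(1+\epsilon)^2}}$ to $\EE[N^2]$, while $(\EE N)^2\asymp n^{\,4-\frac{4}{(1+\epsilon)^2}}$. For small $\epsilon$ the former exceeds the latter (the crossover is at $(1+\epsilon)^2=4/3$), so $\EE[N^2]/(\EE N)^2\to\infty$ and Paley--Zygmund gives nothing near the threshold.

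The paper's lower bound avoids this by exploiting the \emph{structure} of the $\rho=1/2$ correlation rather than trying to bound it away. It fixes small subsets $H_A\subset A$, $H_B\subset B$ of size $h=n/\log^2 n$ and decomposes, for $a\in H_A$, $b\in H_B$,
\[
\langle \bW,\,(y^{(ab)})^{\oeq k}-y^{\oeq k}\rangle = Y_a + Y_b + Z_{ab},
\]
where $Y_a$ depends only on the entries of $\bW$ with exactly one index in $H_A\cup H_B$ equal to $a$, and similarly for $Y_b$. The key point is that $\{Y_a\}_{a\in H_A}\cup\{Y_b\}_{b\in H_B}$ are i.i.d.\ Gaussians carrying essentially all the variance, while $\mathrm{Var}(Z_{ab})=o(n^{k-1})$ because $h=o(n)$. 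Then
\[
\max_{a,b}\bigl(Y_a+Y_b+Z_{ab}\bigr)\ \ge\ \max_a Y_a+\max_b Y_b-\max_{a,b}(-Z_{ab}),
\]
and one applies the sharp two-sided bound for the maximum of \emph{independent} Gaussians to $\max_a Y_a$ and $\max_b Y_b$ separately. This sidesteps the second-moment obstruction entirely: the additive decomposition $Y_{ab}=Y_a+Y_b$ is precisely the source of the $\rho=1/2$ correlation, and handling it via $\max_a Y_a+\max_b Y_b$ turns that correlation into an advantage rather than an obstacle.
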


First we prove the following lemma.

\begin{lemma}
Let $\phi$ be a function defined as $\phi(t) = \frac{1}{2^{2k-1}} \left((1-t)^k + (1+t)^k\right)$. Then, 
$$
\ip{x^{\oeq k}, y^{\oeq k}} = n^k \phi\left(\frac{x^T y}{n}\right)
$$
for any $x, y\in \{\pm 1\}^n$ such that $\one^T x= \one^T y=0$.
\end{lemma}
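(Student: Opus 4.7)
The plan is to expand both $x^{\oeq k}$ and $y^{\oeq k}$ using the ``indicator'' identity already recorded in the paper and then exploit multilinearity of the inner product together with the constraints $\one^T x = \one^T y = 0$.

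First I would recall that $y^{\oeq k}$ is by definition the indicator of the event that all $k$ indices have the same $\pm 1$ label under $y$, so it splits as a sum of indicators of ``all $+1$'' and ``all $-1$'' coordinates. Concretely, the identity stated in the introduction generalizes verbatim to
\[
y^{\oeq k} = \left(\frac{\one+y}{2}\right)^{\otimes k} + \left(\frac{\one-y}{2}\right)^{\otimes k},
\]
and the same holds for $x^{\oeq k}$. This reduces the computation to inner products of rank-one $k$-tensors.

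Next I would use the elementary fact $\ip{a^{\otimes k}, b^{\otimes k}} = \ip{a,b}^k$ to expand $\ip{x^{\oeq k}, y^{\oeq k}}$ into a sum of four $k$-th powers:
\[
\ip{x^{\oeq k}, y^{\oeq k}} = \frac{1}{2^{2k}}\sum_{\epsilon,\delta \in \{\pm 1\}} \ip{\one+\epsilon x,\, \one+\delta y}^k.
\]
Using $\one^T\one = n$ and $\one^T x = \one^T y = 0$, each inner product equals $n + \epsilon\delta\, x^T y$, so the sum collapses to
\[
\frac{2}{2^{2k}}\Bigl[(n + x^T y)^k + (n - x^T y)^k\Bigr] = \frac{n^k}{2^{2k-1}}\Bigl[(1 + x^T y/n)^k + (1 - x^T y/n)^k\Bigr],
\]
which is exactly $n^k \phi(x^T y/n)$ by definition of $\phi$.

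There is no serious obstacle here: the entire argument is a bookkeeping exercise relying on the tensor-product identity for $y^{\oeq k}$, multilinearity, and the zero-sum constraints. The only mild care needed is to make sure the four cross terms combine correctly (the two ``$\epsilon = \delta$'' terms each contribute $(n+x^Ty)^k$ and the two ``$\epsilon = -\delta$'' terms each contribute $(n-x^Ty)^k$), which is how the factor of $2$ in the numerator matches the $2^{2k-1}$ in $\phi$.
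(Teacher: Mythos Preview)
Your proposal is correct and follows essentially the same argument as the paper: expand $x^{\oeq k}$ and $y^{\oeq k}$ via the identity $z^{\oeq k}=\tfrac{1}{2^k}\bigl((\one+z)^{\otimes k}+(\one-z)^{\otimes k}\bigr)$, use $\ip{a^{\otimes k},b^{\otimes k}}=\ip{a,b}^k$, simplify each inner product using $\one^T x=\one^T y=0$, and collect the four terms into $n^k\phi(x^Ty/n)$.
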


\begin{proof}
Note that $x^{\oeq k} = \frac{1}{2^k} \left((\one+x)^{\otimes k} + (\one-x)^{\otimes k}\right)$. Hence,
$$
\ip{x^{\oeq k},y^{\oeq k}} = \frac{1}{2^{2k}}\sum_{s,t \in \{\pm 1\}} \ip{\one+sx, \one+ty}^k.
$$
Since $\one^T x = \one^T y = 0$, we have
\begin{eqnarray*}
\ip{x^{\oeq k},y^{\oeq k}} &=& \frac{1}{2^{2k}} \left(2(\one^T \one + x^T y)^k + 2(\one^T \one - x^T y)^k\right) \\
&=& \frac{n^k}{2^{2k-1}} \left(\left(1+\frac{x^T y}{n}\right)^k + \left(1-\frac{x^T y}{n}\right)^k\right)\\
&=& n^k \phi\left(\frac{x^T y}{n}\right)
\end{eqnarray*}
as desired. 
\end{proof}

\begin{proof}[Proof of Theorem \ref{thm:A2}]

Let
$$
f(x) = \ip{x^{\oeq k}, \bT} = \ip{x^{\oeq k}, y^{\oeq k}+\sigma \bW}.
$$
By definition, $\widehat{y}_{ML}$ is not equal to $y$ if there exists $x \in \{\pm 1\}^n$ distinct from $y$ or $-y$ such that $\one^T x = 0$ and $f(x)$ is greater than or equal to $f(y)$. For each fixed $x \in \{\pm 1\}^n$ with $\one^T x=0$, note that 
$$
f(x)-f(y) = \ip{y^{\oeq k}, x^{\oeq k}-y^{\oeq k}} + \sigma \ip{\bW, x^{\oeq k}-y^{\oeq k}}
$$
is a Gaussian random variable with mean $-n^k(\phi(1)-\phi(x^T y/n)$ and variance
$$
\sigma^2 \|x^{\oeq k}-y^{\oeq k}\|_F^2 = 2\sigma^2 n^k (\phi(1)-\phi(x^T y/n)).
$$
Hence, $\Pr \left(f(x) - f(y) \geq 0\right)$ is equal to
\[
\Pr_{G \sim N(0,1)} \left(G \geq \frac{n^{k/2}}{\sigma \sqrt{2}} \cdot \sqrt{\phi(1)-\phi\left(\frac{x^T y}{n}\right)}\right).
\]

Let $\sigma_*$ be
$$
\sigma_* = \sqrt{\phi'(1)} \cdot \frac{n^{\frac{k-1}{2}}}{\sqrt{2\log n}}.
$$
Since $\phi'(1) = \frac{k}{2^k}$, it matches with the definition in the statement of the Theorem.

\noindent \emph{Upper bound.} Let us prove that $p(\widehat{y}_{ML};\sigma) = 1-o_n(1)$ if $\sigma < (1-\epsilon)\sigma_*$.
By union bound, we have
\begin{eqnarray*}
1-p(\widehat{y}_{ML};\sigma) &\leq & \sum_{\substack{x \in \{\pm 1\}^n\setminus \{y,-y\} \\ \one^T x = 0}} \Pr \left(f(x) - f(y) \geq 0 \right) \\
&=& \sum_{\substack{x \in \{\pm 1\}^n \setminus \{y,-y\} \\ \one^T x = 0}} \Pr_{G \sim N(0,1)} \left(G \geq \frac{n^{k/2}}{\sigma \sqrt{2}} \cdot \sqrt{\phi(1) - \phi\left(\frac{x^T y}{n}\right)}\right) \\
&\leq & \sum_{\substack{x \in \{\pm 1\}^n \setminus \{y,-y\} \\ \one^T x = 0}} \exp\left(-\frac{n^k}{4\sigma^2} \left(\phi(1) - \phi\left(\frac{x^Ty}{n}\right)\right)\right).
\end{eqnarray*}
The last inequality follows from a standard Gaussian tail bound $\Pr_G (G > t) \leq \exp(-t^2/2)$.

A simple counting argument shows that the number of $x \in \{\pm 1\}^n$ with $\one^T x = 0$ and $x^T y = n-4r$ is exactly $\binom{n/2}{r}^2$ for $r \in \{0,1,\cdots,n/2\}$. Moreover, for any $t \geq 0$ we have $\phi(t)=\phi(-t)$. Hence,
$$
1 - p(\widehat{y}_{ML};\sigma) \leq 2\sum_{r=1}^{\ceil{n/4}} \binom{n/2}{r}^2 \exp\left(-\frac{n^k}{4\sigma^2} \left(\phi(1) - \phi\left(1-\frac{4r}{n}\right)\right)\right).
$$

Note that $\phi(1)-\phi(1-x) \geq \phi'(1) x - O(x^2)$. Hence, there exists an absolute constant $C > 0$ such that $\phi(1)-\phi(1-4r/n)$ is at least $(1-\epsilon)\phi'(1)\cdot 4r/n$ if $r < Cn$ and is at least $\Omega(1)$ otherwise. Since $\sigma < (1-\epsilon)\sigma^*$ we have
$$
-\frac{n^k}{4\sigma^2} \left(\phi(1)-\phi\left(1-\frac{4r}{n}\right)\right) \leq -\frac{2r\log n}{1-\epsilon} \leq -2(1+\epsilon)r\log n
$$
if $r < Cn$, and $-\frac{n^k}{4\sigma^2} \left(\phi(1)-\phi\left(1-\frac{4r}{n}\right)\right) = -\Omega(n\log n)$ otherwise. It implies that
\begin{eqnarray*}
1-p(\widehat{y}_{ML};\sigma) &\leq& 2 \left(\sum_{r=1}^{Cn}\exp(-2\epsilon r\log n) + n\exp(-\Omega(n\log n))\right) \\
&\lesssim & n^{-2\epsilon} + n\exp(-\Omega(n\log n)) = o_n(1).
\end{eqnarray*}

\noindent\emph{Lower bound.} Now we prove that $p(\widehat{y}_{ML};\sigma) = o_n(1)$ if $\sigma > (1-\epsilon)\sigma^*$.
Let $A = \{i \in [n] : y_i = +1\}$ and $B = [n]\setminus A$. For each $a \in A$ and $b \in B$, let $E_{ab}$ be the event that $f(y^{(ab)})$ is greater than $f(y)$ where $y^{(ab)}$ is the $\pm 1$-vector obtained by flipping the signs of $y_a$ and $y_b$. For any $H_A \subseteq A$ and $H_B \subseteq B$, note that
\begin{eqnarray*}
1-p(\widehat{y}_{ML};\sigma) &\geq& \Pr\left(\bigcup_{a \in H_A, b\in H_B} E_{ab}\right) \\
&=& \Pr\left(\max_{a\in H_A, b \in H_B} \left(f(y^{(ab)})-f(y)\right) > 0\right)
\end{eqnarray*}
since any of the event $E_{ab}$ implies that $\widehat{y}_{ML} \neq y$. Recall that 
$$
f(y^{(ab)}) - f(y) = -n^k\left(\phi(1)-\phi\left(1-\frac{4}{n}\right)\right) + \sigma \ip{\bW, (y^{(ab)})^{\oeq k} - y^{\oeq k}}.
$$
So, $1-p(\widehat{y}_{ML};\sigma)$ is at least
$$
\Pr\left(\max_{\substack{a\in H_A \\ b\in H_B}} \ip{\bW,(y^{(ab)})^{\oeq k}-y^{\oeq k}} > \frac{n^k}{\sigma} \left(\phi(1)-\phi\left(1-\frac4n\right)\right)\right).
$$

Fix $H_A \subseteq A$ and $H_B \subseteq B$ with $|H_A|=|H_B|=h$ where $h = \frac{n}{\log^2 n}$. Let $(\mathcal{X},\mathcal{Y},\mathcal{Z})$ be the partition of $[n]^k$ defined as
\begin{eqnarray*}
\mathcal{X} &=& \{ \alpha \in [n]^k : \alpha^{-1}(H_A \cup H_B)=\emptyset \}, \\
\mathcal{Y} &=& \{ \alpha \in [n]^k : |\alpha^{-1}(H_A \cup H_B)|=1 \}, \\
\mathcal{Z} &=& \{ \alpha \in [n]^k : |\alpha^{-1}(H_A \cup H_B)|\geq 2 \}.
\end{eqnarray*}
Let $\bW_{\mathcal{X}}$, $\bW_{\mathcal{Y}}$ and $\bW_{\mathcal{Z}}$ be the $k$-tensor supported on $\mathcal{X}$ , $\mathcal{Y}$, $\mathcal{Z}$ respectively. For each $a\in H_A$ and $b\in H_B$, let
\begin{eqnarray*}
X_{ab} &=& \ip{\bW_\mathcal{X}, (y^{(ab)})^{\oeq k}-y^{\oeq k}},\\
Y_{ab} &=& \ip{\bW_\mathcal{Y}, (y^{(ab)})^{\oeq k}-y^{\oeq k}},\\
Z_{ab} &=& \ip{\bW_\mathcal{Z}, (y^{(ab)})^{\oeq k}-y^{\oeq k}}.\\
\end{eqnarray*}

\begin{claim}
The followings are true:
\begin{itemize}
\item[(i)] $X_{ab} = 0$ for any $a \in H_A$ and $b \in H_B$.
\item[(ii)] For fixed $a \in H_A$ and $b \in H_B$, the variables $Y_{ab}$ and $Z_{ab}$ are independent.
\item[(iii)] Each $Y_{ab}$ can be decomposed into $Y_a + Y_b$ where $\{Y_a\}_{a \in H_A} \cup \{Y_b\}_{b \in H_B}$ is a collection of i.i.d. Gaussian random variables.
\end{itemize}
\end{claim}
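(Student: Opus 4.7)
My plan is to exploit the support structure of $\bW_{\mathcal{X}}, \bW_{\mathcal{Y}}, \bW_{\mathcal{Z}}$ together with the observation that, since $y^{(ab)}$ differs from $y$ only at coordinates $a$ and $b$, the tensor $(y^{(ab)})^{\oeq k} - y^{\oeq k}$ vanishes on any $\alpha \in [n]^k$ that avoids $\{a,b\}$. All three items will then follow from this and the fact that disjoint supports turn Gaussian linear functionals into independent random variables.

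For (i), every $\alpha \in \mathcal{X}$ satisfies $\alpha^{-1}(H_A \cup H_B) = \emptyset$, and since $\{a,b\} \subseteq H_A \cup H_B$, such $\alpha$ avoids $\{a,b\}$. So the coefficient in the sum defining $X_{ab}$ vanishes on the support of $\bW_{\mathcal{X}}$, yielding $X_{ab} = 0$. For (ii), the disjointness of $\mathcal{Y}$ and $\mathcal{Z}$ makes $\bW_{\mathcal{Y}}$ and $\bW_{\mathcal{Z}}$ independent, since they are linear images of disjoint i.i.d.\ Gaussian entries of $\bW$; hence the linear functionals $Y_{ab}$ and $Z_{ab}$ are independent.

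For (iii), I would partition the contributing indices in $\mathcal{Y}$ by the unique coordinate $\alpha_p \in H_A \cup H_B$. If $\alpha_p \notin \{a,b\}$ the contribution is zero by the initial observation, so I only need $\alpha_p = a$ or $\alpha_p = b$. I define $Y_a$ (respectively $Y_b$) as the sum of $\bW_\alpha \cdot ((y^{(ab)})^{\oeq k}_\alpha - y^{\oeq k}_\alpha)$ over $\alpha \in \mathcal{Y}$ with $\alpha_p = a$ (resp.~$b$). The crucial point, which I expect to be the main care point, is that for any such $\alpha$ the remaining $k-1$ coordinates lie in $(A \setminus H_A) \cup (B \setminus H_B)$ and are therefore unaffected by the flip; hence the coefficient depends only on whether $a$ (resp.~$b$) has been flipped and not on the other flipped index. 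So $Y_a$ and $Y_b$ are well defined as single-index objects, and $Y_{ab} = Y_a + Y_b$ by construction.

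Finally, for any $c, c' \in H_A \cup H_B$ with $c \neq c'$, the supports of the $\bW$-entries entering $Y_c$ and $Y_{c'}$ are disjoint (each such entry corresponds to an $\alpha$ whose unique coordinate in $H_A \cup H_B$ equals the defining index), so $\{Y_c\}_{c \in H_A \cup H_B}$ are mutually independent centered Gaussians. A direct variance computation will give $\var(Y_c) = 2k(n/2 - h)^{k-1}$ for every $c$, using that $|A \setminus H_A| = |B \setminus H_B| = n/2 - h$; the factor $k$ comes from the position of the special coordinate and the $2$ from the two all-$A'$ and all-$B'$ sign patterns. This shows the collection is i.i.d.\ as claimed. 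No step looks genuinely hard; the only delicate bookkeeping is verifying that the definition of $Y_a$ truly does not depend on $b$, since that is precisely what makes the decomposition symmetric across pairs in $H_A \times H_B$ and usable for the subsequent maximum-over-pairs argument.
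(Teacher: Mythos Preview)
Your proposal is correct and follows essentially the same route as the paper: both arguments use that $(y^{(ab)})^{\oeq k}-y^{\oeq k}$ vanishes on any $\alpha$ avoiding $\{a,b\}$, partition $\mathcal{Y}$ according to the unique coordinate in $H_A\cup H_B$, and observe that for $\alpha\in\mathcal{Y}_a$ the remaining coordinates lie outside $H_A\cup H_B$ so the coefficient is insensitive to $b$. The paper additionally writes out the explicit $\pm1$ values of the coefficient on $\mathcal{Y}_a$ and the resulting variance $2k(n/2-h)^{k-1}$, which matches the computation you sketch.
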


\begin{proof}[Proof of Claim]
First note that $\left((y^{(ab)})^{\oeq k}-y^{\oeq k}\right)_\alpha$ is non-zero if and only if $|\alpha^{-1}(a)|$ and $|\alpha^{-1}(b)|$ have the same parity. This implies (i) since when $\alpha \in \mathcal{X}$ we have $|\alpha^{-1}(a)| = |\alpha^{-1}(b)|=0$. (ii) holds because $\mathcal{Y} \cap \mathcal{Z} = 0$. 

For $s \in H_A \cup H_B$, let $\mathcal{Y}_{s}$ be the subset of $\mathcal{Y}$ such that 
$$
\mathcal{Y}_{s} = \{ \alpha \in \mathcal{Y}: |\alpha^{-1}(s)|=1 \}.
$$
By definition, $\mathcal{Y}_{s}$ are disjoint and $\mathcal{Y} = \bigcup_{s \in H_A \cup H_B} \mathcal{Y}_s$. Hence, 
$$
Y_{ab} = \sum_{s \in H_A \cup H_B} \ip{\bW_{\mathcal{Y}_s}, (y^{(ab)})^{\oeq k}-y^{\oeq k}}.
$$
Moreover, $\ip{\bW_{\mathcal{Y}_s}, (y^{(ab)})^{\oeq k}-y^{\oeq k}}$ is zero when $s \not\in \{a, b\}$. So,
$$
Y_{ab} = \sum_{\alpha \in \mathcal{Y}_a \cup \mathcal{Y}_b} \bW_\alpha ((y^{(ab)})^{\oeq k}-y^{\oeq k})_\alpha.
$$
Note that for $\alpha \in \mathcal{Y}_a$
$$
((y^{(ab)})^{\oeq k}-y^{\oeq k})_\alpha = \begin{cases}
+1 & \text{if $|\alpha^{-1}(A\setminus H_A)| = 0$} \\
-1 & \text{if $|\alpha^{-1}(B\setminus H_B)| = 0$} \\
0 & \text{otherwise}.
\end{cases}
$$
So,
$$
\sum_{\alpha \in \mathcal{Y}_a} \bW_\alpha ((y^{(ab)})^{\oeq k}-y^{\oeq k})_\alpha = \sum_{\substack{\alpha \in \mathcal{Y}_a \\ |\alpha^{-1}(A\setminus H_A)|=0}} \bW_{\alpha} - \sum_{\substack{\alpha \in \mathcal{Y}_a \\ |\alpha^{-1}(B\setminus H_B)|=0}} \bW_{\alpha}
$$
which does not depend on the choice of $b$. Let
$$
Y_a = \sum_{\substack{\alpha \in \mathcal{Y}_a \\ |\alpha^{-1}(A\setminus H_A)|=0}} \bW_{\alpha} - \sum_{\substack{\alpha \in \mathcal{Y}_a \\ |\alpha^{-1}(B\setminus H_B)|=0}} \bW_{\alpha}
$$
and $Y_b$ respectively. Then $Y_{ab} = Y_a + Y_b$ and $\{Y_s\}_{s \in H_A \cup H_B}$ is a collection of independent Gaussian random variables. Moreover, the variance of $Y_s$ is equal to $2k\left(\frac{n}{2}-h\right)^{k-1}$,
independent of the choice of $s$.
\end{proof}

By the claim, we have $\ip{\bW, (y^{(ab)})^{\oeq k}-y^{\oeq k}} = Y_a + Y_b + Z_{ab}$. Moreover,
\begin{eqnarray*}
\max_{\substack{a\in H_A\\b\in H_B}} (Y_a+Y_b+Z_{ab}) &\geq & 
\max_{\substack{a\in H_A\\b\in H_B}} (Y_a+Y_b) - \max_{\substack{a\in H_A\\b\in H_B}} (-Z_{ab})\\
&=& \max_{a \in H_A} Y_a + \max_{b \in H_B} Y_b - \max_{\substack{a\in H_A\\b\in H_B}} (-Z_{ab}).
\end{eqnarray*}

We need the following tail bound on the maximum of Gaussian random variables.
\begin{lemma}
Let $G_1,\dotsc,G_N$ be (not necessarily independent) Gaussian random variables with variance $1$. Let $\epsilon > 0$ be a constant which does not depend on $N$. Then,
$$
\Pr\left(\max_{i=1,\cdots,N} G_i > (1+\epsilon)\sqrt{2\log N} \right) \leq N^{-\epsilon}.
$$
On the other hand,
$$
\Pr\left(\max_{i=1,\cdots,N} G_i < (1-\epsilon)\sqrt{2\log N}\right) \leq \exp(-N^{\Omega(\epsilon)})
$$
if $G_i$'s are independent.
\end{lemma}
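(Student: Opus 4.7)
The lemma has two directions, and the plan is to treat them separately using standard Gaussian tail asymptotics --- the upper bound is a clean union-bound argument, while the lower bound uses independence through the product structure of the CDF of the maximum.

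For the first inequality, the plan is to apply a union bound and the standard Gaussian tail estimate $\Pr(G > t) \leq \exp(-t^2/2)$. Substituting the threshold $t = (1+\epsilon)\sqrt{2\log N}$ yields $N \cdot \exp\left(-(1+\epsilon)^2 \log N\right) = N^{1-(1+\epsilon)^2}$, and since $(1+\epsilon)^2 \geq 1+\epsilon$ this is at most $N^{-\epsilon}$. Independence plays no role here, which matches the fact that the statement is claimed for arbitrary joint distributions.

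For the second inequality, independence is essential and I would exploit it via the factorization $\Pr(\max_i G_i < s) = (1 - \Pr(G > s))^N \leq \exp(-N \Pr(G > s))$, using $1-x \leq e^{-x}$. The remaining step is to feed in a tail \emph{lower} bound via the standard Mill's ratio estimate, roughly $\Pr(G > s) \gtrsim \frac{1}{s} \exp(-s^2/2)$ for $s \geq 1$. Substituting $s = (1-\epsilon)\sqrt{2\log N}$ gives a Gaussian tail of order $N^{-(1-\epsilon)^2}/\sqrt{\log N}$, so $N \cdot \Pr(G > s)$ grows at least as $N^{2\epsilon - \epsilon^2}/\sqrt{\log N}$, which is $N^{\Omega(\epsilon)}$ for $N$ large, producing the doubly-exponential tail $\exp(-N^{\Omega(\epsilon)})$ as claimed.

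The main obstacle, such as it is, is cosmetic: one must make sure the $1/\sqrt{\log N}$ drag from Mill's ratio does not destroy the polynomial growth in $N$ inside the exponent. This is precisely the reason the statement records the weaker exponent $\Omega(\epsilon)$ rather than $\epsilon$ itself; the slack $2\epsilon - \epsilon^2 - o(1)$ in the exponent absorbs the logarithmic factor. For small $N$ the conclusion $\exp(-N^{\Omega(\epsilon)}) \leq 1$ is vacuous, and one only needs the argument in the asymptotic regime $N \to \infty$, which is how the lemma is used in Theorem~\ref{thm:A2} with $N = h = n/\log^2 n \to \infty$.
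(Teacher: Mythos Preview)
Your proof is correct. The paper actually states this lemma without proof, treating it as a standard Gaussian tail fact; your argument---union bound plus the Chernoff tail $\Pr(G>t)\le e^{-t^2/2}$ for the upper inequality, and the product form under independence combined with the Mill's-ratio lower bound for the lower inequality---is precisely the textbook derivation, and your handling of the $1/\sqrt{\log N}$ factor via the slack $2\epsilon-\epsilon^2$ versus $\Omega(\epsilon)$ is exactly the right observation.
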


By the Lemma, we have
\begin{eqnarray*}
\max_{a \in H_A} Y_a &\geq& (1-0.01\epsilon)\sqrt{2\log h \cdot 2k \left(\frac{n}{2}-h\right)^{k-1}} \\
\max_{b \in H_A} Y_b &\geq& (1-0.01\epsilon)\sqrt{2\log h \cdot 2k \left(\frac{n}{2}-h\right)^{k-1}} \\
\max_{\substack{a \in H_A \\ b \in H_B}} Z_{ab} &\lesssim& \sqrt{\log h \cdot \max \var(Z_{ab})}
\end{eqnarray*}
with probability $1-o_n(1)$. Note that
\begin{eqnarray*}
\var(Z_{ab}) &=& \|(y^{(ab)})^{\oeq k}-y^{\oeq k}\|_F^2 - (\var(Y_A)+\var(Y_B)) \\
&=& 2n^k \left(\phi(1)-\phi\left(1-\frac{4}{n}\right)\right) - 4k \left(\frac{n}{2}-h\right)^{k-1} \\
&\leq & 8n^{k-1}\left(\phi'(1) - (1-o(1)) \frac{k}{2^k}\right)
\end{eqnarray*}
which is $o(n^{k-1})$. Hence,
\begin{eqnarray*}
\max_{\substack{a \in H_A \\ b \in H_B}} \ip{\bW, (y^{(ab)})^{\oeq k}-y^{\oeq k}} &\geq& 2(1-0.01\epsilon-o(1))\sqrt{2\log n \cdot 2k\left(\frac{n}{2}-h\right)^{k-1}} \\
&\geq & (1-0.01\epsilon-o(1))\sqrt{\frac{kn^{k-1}\log n}{2^{k-5}}}.
\end{eqnarray*}
On the other hand, since $\sigma > (1+\epsilon) \sigma^*$ we have
\begin{eqnarray*}
\frac{n^k}{\sigma}\left(\phi(1)-\phi\left(1-\frac{4}{n}\right)\right) &<& \frac{n^k}{1+\epsilon}\cdot \frac{4\phi'(1)}{n} \cdot \sqrt{\frac{2\log n}{n^{k-1}\phi'(1)}} \\
&<& \frac{1}{1+\epsilon} \sqrt{\frac{n^{k-1}\log n}{2^{k-5}}}
\end{eqnarray*}
which is less than
$$
\max_{\substack{a \in H_A \\ b \in H_B}} \ip{\bW, (y^{(ab)})^{\oeq k}-y^{\oeq k}} 
$$
with probability $1-o_n(1)$. Thus, $1-p(\widehat{y}_{ML})\geq 1-o_n(1)$. 
\end{proof}

\section{Proof of Theorem \ref{thm:B}}

In this section, we prove Theorem \ref{thm:B} for general $k$.

Let $k$ be a positive integer with $k > 2$. Let $y \in \{\pm 1\}^n$ with $\one^T y = 0$ and $\bT$ be a $k$-tensor defined as $\bT = y^{\oeq k} + \sigma \bW$ where $\bW$ is a random $k$-tensor with i.i.d. standard Gaussian entries. Let $f(x) = \ip{x^{\oeq k}, \bT}$ and let $f_{(2)}(x)$ be the degree 2 truncation of $f(x)$, i.e.,
$$
f_{(2)}(x) = \sum_{\alpha \in [n]^k} \bT_{\alpha} \left(\frac{1}{2^{k-1}} \sum_{1\leq s < t \leq k} x_{\alpha(s)}x_{\alpha(t)}\right).
$$
For each $\{s < t\}\subseteq [k]$, let $Q^{st}$ be $n$ by $n$ matrix where
$$
Q^{st}_{ij} = \frac{1}{2} \left(\sum_{\substack{\alpha \in [n]^k \\ \alpha(s) = i, \alpha(t) = j}} \bT_\alpha + \sum_{\substack{\alpha \in [n]^k \\ \alpha(s) = j, \alpha(t) = i}} \bT_\alpha\right)
$$
Then,
$$
f_{(2)}(x) = \frac{1}{2^{k-1}} \ip{Q, xx^T}
$$
where $Q = \sum_{1\leq s<t\leq k} Q^{st}$. We consider the following SDP relaxation for $\max_x f_{(2)}(x)$:
\begin{equation}
\label{eqn:sdpk}
\begin{aligned}
&&\max \quad& \ip{Q, X}\\
&&\text{subject to}\quad& X_{ii}=1 \text{ for all $i\in [n]$},\\
&&& \ip{X, \one\one^T} = 0,\\
&&& X \succeq 0.
\end{aligned}
\end{equation}

\begin{theorem}[Theorem \ref{thm:B}, general $k$]
\label{thm:B2}
Let $\epsilon > 0$ be a constant not depending on $n$. Let $\widehat{Y}$ be a solution of (\ref{eqn:sdpk}) and $p(\widehat{Y};\sigma)$ be the probability that $\widehat{Y}$ coincide with $yy^T$. Let $\sigma_{(2)}$ be 
$$
\sigma_{(2)} = \sqrt{\frac{k(k-1)}{2^{2k-1}}} \cdot \frac{n^{\frac{k-1}{2}}}{\sqrt{2\log n}}
$$
If $\sigma < (1-\epsilon) \sigma_{(2)}$, then $p(\widehat{Y};\sigma) = 1-o_n(1)$. 
\end{theorem}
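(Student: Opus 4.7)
The plan follows the primal--dual recovery paradigm developed for the graph SBM in \cite{hajek2016achieving,Bandeira2016laplacian}, now for the matrix $Q$ coming from the truncated quadratic form $f_{(2)}$. After conjugating by $\diag(y)$, the SDP (\ref{eqn:sdpk}) is equivalent to maximizing $\ip{Q',\widetilde{X}}$ over $\widetilde{X}\succeq 0$ with unit diagonal and $\ip{\widetilde{X},yy^T}=0$, so the candidate optimum becomes $\widetilde{X}=\one\one^T$. Applying KKT with the complementary slackness requirement $S\one = 0$ forces any dual certificate to take the form $S = D_{Q'} - Q'$ up to an additive $\mu yy^T$ correction (which may be chosen freely to lift a degeneracy along $y$ of order $n^{k-2}$ appearing in the expectation). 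Establishing that $S \succeq 0$ with $\ker S = \mathrm{span}(\one)$ then certifies $yy^T$ as the unique primal optimum of (\ref{eqn:sdpk}).

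First I would compute $\EE S$. Since $\EE \bT = y^{\oeq k}$ and $\EE \bW = 0$, a direct combinatorial count gives $\EE Q_{ij} = \binom{k}{2}(n/2)^{k-2}\,\mathbf{1}[y_i = y_j]$ for every $i,j\in[n]$, and hence $\EE Q' = \binom{k}{2}(n/2)^{k-2}\cdot \tfrac{1}{2}(\one\one^T + yy^T)$ and $\EE D_{Q'} = \binom{k}{2}(n/2)^{k-1}\,I$. Consequently $\EE S$ annihilates $\one$ and, after the small correction $\mu \asymp n^{k-2}$ in the $y$-direction, acts as $\tfrac{k(k-1)}{2^k}\,n^{k-1}\cdot I$ on the orthogonal complement, giving a second-smallest eigenvalue of order $\tfrac{k(k-1)}{2^k}\,n^{k-1}$.

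Next I would bound the fluctuation $\|S - \EE S\|_{op}$. Decompose $S - \EE S = D_{Q' - \EE Q'} - (Q'-\EE Q')$, where $Q'-\EE Q' = \sigma\diag(y)\widehat{W}\diag(y)$ with $\widehat{W} = \sum_{s<t}W^{st}$ a symmetric Gaussian matrix whose entries have variance $\Theta(n^{k-2})$, and where the diagonal entries of $D_{Q'-\EE Q'}$ are row-sum Gaussians of variance $\Theta(\sigma^2 n^{k-1})$. I would then invoke the sharp non-asymptotic Gaussian Laplacian concentration bound of \cite{Bandeira2016laplacian}, which is tailored to precisely this Laplacian structure, to conclude $\|S - \EE S\|_{op} = O(\sigma n^{(k-1)/2}\sqrt{\log n})$ with a sharp leading constant that, when paired with the expectation estimate above, reproduces exactly $\sigma_{(2)}$. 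For $\sigma < (1-\epsilon)\sigma_{(2)}$ the signal strictly dominates the noise, so $S \succeq 0$ with $\ker S = \mathrm{span}(\one)$ with probability $1-o_n(1)$, certifying $\widehat{Y} = yy^T$.

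The main technical obstacle is extracting the sharp constants throughout so that the bound matches $\sigma_{(2)} = \sqrt{\tfrac{k(k-1)}{2^{2k-1}}}\cdot n^{(k-1)/2}/\sqrt{2\log n}$ exactly. This requires a careful row-variance calculation for $Q = \sum_{s<t}Q^{st}$ that handles the (mild) correlations between the $\binom{k}{2}$ marginal constituents $Q^{st}$, since distinct pairs $(s,t)$ and $(s',t')$ share underlying entries of $\bW$; any slack in the resulting $\max_i\var((Q'\one)_i)$ estimate directly loosens the leading constant in the Gaussian Laplacian bound and hence the threshold.
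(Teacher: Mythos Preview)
Your proposal is correct and follows essentially the same approach as the paper's proof: construct the dual certificate $S=D_{Q'}-Q'+\mu yy^T$ via complementary slackness, compute $\EE S$ to extract the signal level $\tfrac{k(k-1)}{2^k}n^{k-1}$ on $(\mathrm{span}\{\one\})^\perp$, and control $\|S-\EE S\|$ via the Gaussian Laplacian concentration bound of \cite{Bandeira2016laplacian}. The paper carries this out by splitting $\|\mathcal{L}(\diag(y)\widebar{W}\diag(y))\|\le \max_i\sum_j \widebar{W}_{ij}y_iy_j+\sum_{s<t}\|W^{st}\|$ and computing the row-sum variance $\le \binom{k}{2}n^{k-1}$ directly, which is exactly the sharp-constant calculation you flag as the main obstacle; your concern about cross-correlations among the $W^{st}$ is valid but affects only lower-order terms.
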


\begin{proof}
First note that $Q^{st} = \frac{n^{k-2}}{2^{k-1}} yy^T + \sigma W^{st}$ where
$$
W^{st}_{ij} = \sum_{\substack{\alpha \in [n]^k \\ \alpha(s) = i, \alpha(t) = j}} \bW_{\alpha}.
$$
So we have
$$
Q = n^{k-2} \cdot \frac{k(k-1)}{2^k} yy^T + \sigma \widebar{W}
$$
where $\widebar{W} = \sum_{1\leq s < t \leq k} W^{st}$. 

The dual program of (\ref{eqn:sdpk}) is 
\begin{equation}
\label{eqn:sdpdual}
\begin{aligned}
&&\max \quad& \tr(D) \\
&&\text{subject to}\quad& D + \lambda \one\one^T - Q \succeq 0,\\
&&& D \text{ is diagonal}.
\end{aligned}
\end{equation}

By complementary slackness, $yy^T$ is the unique optimum solution of (\ref{eqn:sdpk}) if there exists a dual feasible solution $(D, \lambda)$ such that
$$
\ip{D+\lambda\one\one^T-Q, yy^T} = 0
$$
and the second smallest eigenvalue of $D+\lambda\one\one^T-Q$ is greater than zero. For brevity, let $S = D+\lambda \one\one^T - Q$. Since $S$ is positive semidefinite, we must have $Sy = 0$, that is,
$$
D_{ii} = \sum_{j = 1}^n Q_{ij} y_i y_j 
$$
for any $i \in [n]$. 

For a symmetric matrix $M$, we define the \emph{Laplacian} $\mathcal{L}(M)$ of $M$ as $\mathcal{L}(M) = \diag(M\one) - M$ (See \cite{Bandeira2016laplacian}). Using this language, we can express $S$ as
$$
S = \diag(y) \left(\mathcal{L}(\diag(y)Q\diag(y))+\lambda yy^T\right) \diag(y).
$$
Note that
$$
\diag(y)Q\diag(y) = n^{k-2} \cdot \frac{k(k-1)}{2^k} \one\one^T + \sigma \diag(y)\widebar{W}\diag(y).
$$
Hence, the Laplacian of $\diag(y)Q\diag(y)$ is equal to
$$
\underbrace{
n^{k-1} \cdot \frac{k(k-1)}{2^k} \left(I_{n\times n} - \frac{1}{n}\one\one^T\right)}_{\text{deterministic part}} + \underbrace{\sigma \mathcal{L}\left(\diag(y)\widebar{W}\diag(y)\right)}_{\text{noisy part}}.
$$
This matrix is positive semidefinite if
$$
\sigma \left\| \mathcal{L}(\diag(y)\widebar{W}\diag(y)\right\| \leq n^{k-1} \cdot \frac{k(k-1)}{2^k}.
$$
Moreover, if the inequality is strict then the second smallest eigenvalue of $S$ is greater than zero. 

By triangle inequality, we have
\begin{eqnarray*}
\left\|\mathcal{L}(\diag(y)\widebar{W} \diag(y))\right\| &\leq&  \max_{i \in [n]} \sum_{j=1}^n \widebar{W}_{ij} y_i y_j + \sum_{1\leq s<t\leq k} \|\diag(y)W^{st}\diag(y)\| \\
&\leq& \max_{i \in [n]} \sum_{j=1}^n \widebar{W}_{ij} y_i y_j + \binom{k}{2} \sqrt{2n^{k-1}}.
\end{eqnarray*}
The second inequality holds with high probability since $W^{st}$ has independent Gaussian entries. Since $\sum_{j=1}^n \widebar{W}_{ij} y_i y_j$ is Gaussian and centered, with high probability we have that
\begin{eqnarray*}
\max_{i \in [n]} \sum_{j=1}^n \widebar{W}_{ij} y_i y_j &\leq& (1+0.1\epsilon) \sqrt{2\log n} \cdot \left(\max_{i\in [n]} \var\left(\sum_{j=1}^n \widebar{W}_{ij} y_i y_j\right)\right)^{1/2} \\
&\leq& (1+0.1\epsilon) \sqrt{2\log n} \cdot \sqrt{\binom{k}{2}n^{k-1}}.
\end{eqnarray*}
Hence, 
$$
\left\|\mathcal{L}(\diag(y)\widebar{W} \diag(y))\right\| \leq (1+0.1\epsilon+o(1)) \sqrt{2\binom{k}{2} n^{k-1}\log n}
$$
with high probability. So, $S \succeq 0$ as long as
$$
\sigma (1+0.1\epsilon+o(1)) \sqrt{2\binom{k}{2} n^{k-1}\log n} < n^{k-1} \cdot \frac{k(k-1)}{2^k}
$$
or simply 
$$
(1+0.1\epsilon+o(1))\sigma < \sqrt{\frac{k(k-1)}{2^{2k-1}}} \cdot \frac{n^{\frac{k-1}{2}}}{\sqrt{2\log n}} = \sigma_{(2)}.
$$
So, when $\sigma < (1-\epsilon)\sigma_{(2)}$ with high probability $\widehat{Y} = yy^T$. 
\end{proof}

\section{Pseudo-expectation and its moment matrix}

\subsection{Notation and preliminaries}

Let $\cV$ be the space of real-valued functions on the $n$-dimensional hypercube $\{-1,+1\}^n$. For each $S \subseteq [n]$, let $x_S = \prod_{i \in S} x_i$. Note that $\{x_S : S\subseteq [n]\}$ is a basis of $\cV$. Hence, any function $f : \{\pm 1\}^n \to \RR$ can be written as a unique linear combination of multilinear monomials, say
$$
f(x) = \sum_{S \subseteq [n]} f_S x_S.
$$
The degree of $f$ is defined as the maximum size of $S \subset [n]$ such that $f_S$ is nonzero.

Let $\ell$ be a positive integer. Let us denote the collection of subsets of $[n]$ of size at most $\ell$ by $\binom{[n]}{\leq \ell}$, and the size $\left|\binom{[n]}{\leq \ell}\right|$ of it by $\binom{n}{\leq \ell}$. 

Let $M$ be a square symmetric matrix of size $\binom{n}{\leq \ell}$. The rows and the columns of $M$ are indexed by the elements in $\binom{[n]}{\leq \ell}$. To avoid confusion, we use $M[S,T]$ to denote the entry of $M$ at row $S$ and column $T$. 

We say $M$ is \emph{SoS-symmetric} if $M[S,T] = M[S',T']$ whenever $x_S x_T = x_{S'} x_{T'}$ on the hypercube. Since $x \in \{\pm 1\}^n$, it means that $S\oplus T = S' \oplus T'$ where $S \oplus T$ denotes the symmetric difference of $S$ and $T$. Given $f \in \cV$ with degree at most $2\ell$, we say $M$ \emph{represents} $f$ if
$$
f_U = \sum_{\substack{S,T \in \binom{[n]}{\leq \ell}:\\S\oplus T = U}} M[S,T].
$$
We use $M_f$ to denote the unique SoS-symmetric matrix representing $f$.

Let $\cL$ be a linear functional on $\cV$. By linearity, $\cL$ is determined by $(\cL[x_S]: S \subseteq [n])$. Let $X_\cL$ be the SoS-symmetric matrix of size $\binom{n}{\leq \ell}$ with entries $X[S,T] = \cL[x_{S\oplus T}]$. We call $X_\cL$ the \emph{moment matrix of $\cL$ of degree $2\ell$}. 

By definition, we have
\begin{eqnarray*}
\cL[f] &=& \sum_{U \in \binom{[n]}{\leq 2\ell}} f_U \cL[x_U] \\
&=& \sum_{S,T \in \binom{[n]}{\leq \ell}} M_f[S,T] X_\cL[S,T] \\
&=& \ip{X_\cL, M_f}
\end{eqnarray*}
for any $f$ of degree at most $2\ell$. 

\subsection{A quick introduction to pseudo-expectations}

For our purpose, we only work on pseudo-expectations defined on the hypercube $\{\pm 1\}^n$. See \cite{Barak2014} for general definition. 

Let $\ell$ be a positive integer and $d = 2\ell$. A \emph{pseudo-expectation of degree $d$ on $\{\pm 1\}^n$} is a linear functional $\Etilde$ on the space $\cV$ of functions on the hypercube such that
\begin{itemize}
\item[(i)] $\Etilde[1] = 1$,
\item[(ii)] $\Etilde[q^2] \geq 0$ for any $q \in \cV$ of degree at most $\ell = d/2$.
\end{itemize}
We say $\Etilde$ \emph{satisfies} the system of equalities $\{p_i(x) = 0\}_{i=1}^m$ if $\Etilde[f] = 0$ for any $f \in \cV$ of degree at most $d$ which can be written as
$$
f = p_1 q_1 + p_2 q_2 + \cdots + p_m q_m
$$
for some $q_1,q_2,\dotsc,q_m \in \cV$.

We note the following facts:
\begin{itemize}
\item If $\Etilde$ is a pseudo-expectation of degree $d$, then it is also a pseudo-expectation of any degree smaller than $d$. 
\item If $\Etilde$ is a pseudo-expectation of degree $2n$, then $\Etilde$ defines a valid probability distribution supported on $P:=\{x \in \{\pm 1\}^n: p_i(x) = 0 \text{ for all }i\in[m]\}$. 
\end{itemize}

The second fact implies that maximizing $f(x)$ on $P$ is equivalent to maximizing $\Etilde[f]$ over all pseudo-expectations of degree $2n$ satisfying $\{p_i(x)=0\}_{i=1}^m$.
Now, let $d$ be an even integer such that 
$$
d > \max\{\deg(f),\deg(p_1),\cdots,\deg(p_m)\}.
$$
We relax the original problem to
\begin{equation}
\tag{$\mathsf{SoS}_d$}
\label{eqn:polysos}
\begin{aligned}
& & \max \quad & \Etilde[f]  \\
& & \text{ subject to }\quad & \Etilde \text{ is degree-$d$ pseudo-expectation on $\{\pm 1\}^n$} \\
& & & \text{ satisfying $\{p_i(x)=0\}_{i=1}^m$}.
\end{aligned}
\end{equation}

We note that the value of (\ref{eqn:polysos}) decreases as $d$ grows, and it reaches the optimum value $\max_{x\in P} p_0(x)$ of the original problem at $d = 2n$. 

\subsection{Matrix point of view}

Let $\Etilde$ be a pseudo-expectation of degree $d = 2\ell$ for some positive integer $\ell$. Suppose that $\Etilde$ satisfies the system $\{p_i(x) = 0\}_{i=1}^m$. Let $X_{\Etilde}$ be the moment matrix of $\Etilde$ of degree $2\ell$, hence the size of $X_{\Etilde}$ is $\binom{n}{\leq \ell}$. 

Conditions for $\Etilde$ being pseudo-expectation translate as the following conditions for $X_{\Etilde}$: 
\begin{itemize}
\item[(i)] $X_{\emptyset,\emptyset} = 1$.
\item[(ii)] $X$ is positive semidefinite.
\end{itemize}
Moreover, $\Etilde$ satisfies $\{p_i(x) = 0\}_{i=1}^m$ if and only if
\begin{itemize}
\item[(iii)] Let $\cU$ be the space of functions in $\cV$ of degree at most $d$ which can be written as $\sum_{i=1}^m p_i q_i$ for some $q_i \in \cV$. Then, $\ip{M_f, X_{\Etilde}} = 0$ for any $f \in \cU$. 
\end{itemize}

Hence, (\ref{eqn:polysos}) can be written as the following semidefinite program

\begin{equation}
\tag{$\mathsf{SDP}_d$}
\label{eqn:polysossdp}
\begin{aligned}
& & \max \quad & \ip{M_{f}, X}  \\
& & \text{ subject to }\quad & X_{\emptyset,\emptyset}=1 \\
& & & \ip{M_{q}, X} = 0 \text{ for all $q \in \cB$} \\
& & & X \succeq 0,
\end{aligned}
\end{equation}
where $\cB$ is any finite subset of $\cU$ which spans $\cU$, for example, $$\cB = \{x_S p_i(x) : i \in [m], |S| \leq d - \deg(p_i)\}.$$ 

\section{Proof of Theorem \ref{thm:C}}

Let $y \in \{\pm 1\}^n$ such that $\one^T y = 0$, $\sigma > 0$, and $\bW \in (\RR^n)^{\otimes 4}$ be 4-tensor with independent, standard Gaussian entries. Given a tensor $\bT = y^{\otimes 4} + \sigma \bW$, we would like to recover the planted solution $y$ from $\bT$. Let $f(x) = \ip{\bT,x^{\otimes 4}}$. The maximum-likelihood estimator is given by the optimum solution of 
$$
\max_{x\in\{\pm 1\}^n: \one^T x = 0} f(x).
$$
Consider the SoS relaxation of degree $4$
$$
\begin{aligned}
& & \max \quad & \Etilde[f] \\
& & \text{ subject to } \quad & \Etilde \text{ is a pseudo-expectation of degree 4 on $\{\pm 1\}^n$} \\
& & & \text{satisfying } \sum_{i=1}^n x_i = 0.
\end{aligned}
$$
Let $\EE_{U(\{y,-y\})}$ be the expectation operator of the uniform distribution on $\{y, -y\}$, i.e., $\EE_{U(\{y,-y\})}[x_S] = y_S$ if $|S|$ is even, and $\EE_{U(\{y,-y\})}[x_S] = 0$ if $|S|$ is odd. 

If $\EE_{U(\{y,-y\})}$ is the optimal solution of the relaxation, then we can recover $y$ from it up to a global sign flip. First we give an upper bound on $\sigma$ to achieve it with high probability.

\begin{theorem}[Part one of Theorem \ref{thm:C}]
\label{thm:C1}
Let $\bT = y^{\otimes 4} + \sigma \bW$ and $f(x) = \ip{\bT, x^{\otimes 4}}$ be as defined above. If $\sigma \lesssim \frac{n}{\sqrt{\log n}}$, then the relaxation $\max_{\Etilde} \Etilde[f]$ over pseudo-expectation $\Etilde$ of degree 4 satisfying $\sum_{i=1}^n x_i = 0$ is maximized when $\Etilde = \EE_{U(\{y,-y\})}$ with probability $1-o_n(1)$. 
\end{theorem}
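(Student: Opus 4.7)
The plan is to construct an SoS dual witness certifying that the degree-$4$ SoS SDP value is at most $f(y)$ with high probability; since $\widetilde{\EE}_{U(\{y,-y\})}$ is feasible and attains $\widetilde{\EE}_{U(\{y,-y\})}[f] = f(y)$ (because $f(-x) = f(x)$), this will show that $\widetilde{\EE}_{U(\{y,-y\})}$ is an optimizer. Recasting the SoS SDP in moment-matrix form ($X = (\widetilde{\EE}[x_Sx_T])_{|S|,|T|\leq 2}$ with $X\succeq 0$, $X[\emptyset,\emptyset]=1$, and linear constraints from the ideal generated by $x_i^2-1$ and $\sum_i x_i$), I would exhibit a dual pair $(\mu,\{\lambda_q\})$ with $\mu = f(y)$ such that
$$Z \;:=\; f(y)\,E_{\emptyset,\emptyset} + \sum_q \lambda_q M_q - M_f \;\succeq\; 0;$$
weak duality then gives the desired upper bound.

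I would decompose $Z = Z_{\mathrm{sig}} + \sigma Z_{\mathrm{noise}}$. Complementary slackness with the candidate primal optimum $X_* = u_+u_+^T + u_-u_-^T$, where $u_\pm[S]=y_S$ on subsets $S$ of even/odd size (respectively) and $0$ elsewhere, forces $Zu_\pm = 0$, which pins down the Lagrange multipliers. The key SoS identity driving the construction of $Z_{\mathrm{sig}}$ is
$$n^4 - (y^Tx)^4 \;=\; \bigl(n^2-(y^Tx)^2\bigr)\bigl(n^2+(y^Tx)^2\bigr) \;\geq\; n^2\sum_{i<j}(y_ix_j - y_jx_i)^2 \pmod{\langle x_i^2-1\rangle},$$
where the first factor is explicitly SoS via Lagrange's identity on the hypercube and the inequality follows from $n^2 + (y^Tx)^2 \geq n^2$. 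This identity lets me verify that $Z_{\mathrm{sig}}$ has kernel exactly $\mathrm{span}(u_+,u_-)$ and a spectral gap of order $\Theta(n^4)$ on the orthogonal complement.

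For $Z_{\mathrm{noise}}$, I would establish $\|Z_{\mathrm{noise}}\|_{\mathrm{op}} = O(n^3\sqrt{\log n})$ w.h.p.\ by reducing to standard Gaussian matrix concentration on the relevant components: the operator norm of the flattening $\tilde{\bW} \in \RR^{n^2\times n^2}$ with partition $\{1,2\}|\{3,4\}$ (which is $O(n)$), partial contractions $\sum_{kl}\bW_{ijkl}y_ky_l$ (each a Gaussian of variance $n^2$, hence maximum entry $\lesssim n\sqrt{\log n}$), and a Laplacian-type bound as used in the proof of Theorem~\ref{thm:B}. When the signal gap $\Theta(n^4)$ exceeds $\sigma\cdot\|Z_{\mathrm{noise}}\|_{\mathrm{op}} \lesssim \sigma n^3\sqrt{\log n}$ --- i.e.\ when $\sigma \lesssim n/\sqrt{\log n}$ --- the perturbed $Z$ remains PSD, certifying optimality.

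The main obstacle is the construction step: solving $Zu_\pm = 0$ for the $\lambda_q$ compatibly with the SoS-symmetry of the $M_q$ and producing a closed form for $Z_{\mathrm{sig}}$ whose spectral gap on $\{u_+,u_-\}^\perp$ can be explicitly lower bounded. Once this ansatz is in hand, verifying the signal gap should follow from the SoS identity above, while the noise bound is then routine matrix concentration analogous to that carried out in the proof of Theorem~\ref{thm:B2}.
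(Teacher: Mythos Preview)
Your approach is substantially different from the paper's, which is much shorter. The paper simply observes that the canonical flattening of $\bT$ into an $n^2\times n^2$ matrix $T$ satisfies $T=\tilde y\tilde y^T+\sigma W$ with $\tilde y=\vec(yy^T)\in\{\pm1\}^{n^2}$, recognizes this as an instance of $\mathbb Z_2$-synchronization on $n^2$ nodes, and cites the known SDP exact-recovery guarantee (Proposition~2.3 of \cite{Bandeira2016laplacian}) at threshold $\sigma\lesssim n/\sqrt{\log n}$. Since for any degree-$4$ pseudo-expectation the matrix $\bigl(\Etilde[x_ix_jx_kx_l]\bigr)_{(i,j),(k,l)}$ is PSD with unit diagonal, the degree-$4$ SoS feasible set embeds into the flattened SDP feasible set with the same objective value; uniqueness of the flattened optimizer $\tilde y\tilde y^T$ then pins down the even pseudo-moments, which is enough to conclude.

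Your route builds the SoS dual certificate from scratch, which is considerably more work. What the paper's reduction buys is that the entire spectral-gap-versus-noise analysis is outsourced to the $\mathbb Z_2$-synchronization literature: in that language the signal gap is $n^2$ (the second eigenvalue of $n^2 I-\one\one^T$) and the Laplacian noise is $O(n\sqrt{\log n})$, exactly the bound you already invoke in Theorem~\ref{thm:B2}. Your direct construction would, if completed, essentially re-derive this inside the degree-$4$ moment matrix. One caution on the details: the SoS inequality you display, $n^4-(y^Tx)^4\ge n^2\bigl(n^2-(y^Tx)^2\bigr)$, yields a gap of order $n^3$ on the singleton block (via $n^2(nI-yy^T)$) but gives nothing on the $\binom{n}{2}$-indexed pairs block; the needed control there must come from the residual piece $(n^2-(y^Tx)^2)(y^Tx)^2$, and verifying a uniform gap of the size you claim on that block is exactly the obstacle you flag. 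The flattening argument sidesteps this entirely.
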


We can reduce Theorem \ref{thm:C1} to the matrix version of the problem via flattening \cite{Montanari2014}. Given a 4-tensor $\bT$, the canonical flattening of $\bT$ is defined as $n^2 \times n^2$ matrix $T$ with entries $T_{(i,j),(k,\ell)} = \bT_{ijk\ell}$. Then, $T = \widetilde{y}\widetilde{y}^T + \sigma W$ where $\widetilde{y}$ is the vectorization of $yy^T$, and $W$ is the flattening of $\bW$. Note that this is an instance of $\mathbb{Z}_2$-synchronization model with Gaussian noises. It follows that with high probability the exact recovery is possible when $\sigma \lesssim \frac{n}{\sqrt{\log n}}$ (see Proposition 2.3 in \cite{Bandeira2016laplacian}). 

We complement the result by providing a lower bound on $\sigma$ which is off by polylog factor. 

\begin{theorem}[Part two of Theorem \ref{thm:C}]
\label{thm:C2}
Let $c > 0$ be a small constant. If $\sigma \geq n (\log n)^{1/2+c}$, then there exists a pseudo-expectation $\Etilde$ of degree 4 on the hypercube $\{\pm 1\}^n$ satisfying $\sum_{i=1}^n x_i = 0$ such that $\Etilde[f] > f(y)$ with probability $1-o_n(1)$. 
\end{theorem}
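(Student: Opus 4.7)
The plan is to exhibit a degree-4 pseudoexpectation $\Etilde$ on $\{\pm 1\}^n$ satisfying $\one^T x = 0$ with $\Etilde[f] > f(y)$ with high probability, following the pseudo-calibration paradigm of \cite{Hopkins2015, Bhattiprolu2016} adapted to the slice. First I would split
\[
\Etilde[f] - f(y) = \bigl(\Etilde[\ip{y^{\otimes 4}, x^{\otimes 4}}] - n^4\bigr) + \sigma \bigl(\Etilde[\ip{\bW, x^{\otimes 4}}] - \ip{\bW, y^{\otimes 4}}\bigr).
\]
Here $\ip{\bW, y^{\otimes 4}}$ is a centered Gaussian of variance $n^4$ and is therefore $O(n^2 \sqrt{\log n})$ with high probability, while $\ip{y^{\otimes 4}, x^{\otimes 4}} = (y^T x)^4$ is a square of the degree-$2$ polynomial $(y^T x)^2$, so every pseudoexpectation of degree $\ge 4$ satisfies $\Etilde[(y^T x)^4] \geq 0$. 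The task therefore reduces to producing $\Etilde$ with $\Etilde[\ip{\bW, x^{\otimes 4}}] \gtrsim n^3 / (\log n)^{\gamma}$ for some $\gamma < \tfrac12 + c$, since the hypothesis $\sigma \geq n (\log n)^{1/2 + c}$ then forces $\sigma \Etilde[\ip{\bW, x^{\otimes 4}}]$ to strictly exceed $n^4 + \sigma \ip{\bW, y^{\otimes 4}}$.

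To build this $\Etilde$, start from the moment matrix $X_0$ of the uniform distribution on $\{x \in \{\pm 1\}^n : \one^T x = 0\}$, which is automatically positive semidefinite, SoS-symmetric, and annihilates both the hypercube identities and the ideal generated by $\one^T x$. Add a perturbation $X_1$ whose only nonzero block is the degree-$(2,2)$ block, with
\[
X_1[\{i_1,i_2\},\{i_3,i_4\}] = \alpha \cdot \frac{1}{4!} \sum_{\pi} \bW_{\pi(i_1),\pi(i_2),\pi(i_3),\pi(i_4)}
\]
for disjoint pairs $\{i_1,i_2\}, \{i_3,i_4\}$, where the sum is over the permutations of the four indices, together with a finite-dimensional linear correction on lower-degree entries that projects out the component of $X_1$ violating the slice identity (so that $\ip{X_0+X_1, M_{x_R \sum_i x_i}} = 0$ continues to hold for $|R| \leq 3$). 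SoS-symmetry of the perturbation is automatic from its support. A direct computation then gives
\[
\ip{X_0 + X_1, M_{\ip{\bW, x^{\otimes 4}}}} \approx \alpha \|\bW\|_F^2 \sim \alpha n^4,
\]
so any admissible amplitude $\alpha \gtrsim 1/(n (\log n)^{\gamma})$ delivers the required noise correlation.

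The main technical obstacle is to verify that such an $\alpha$ can be taken this large while preserving $X_0 + X_1 \succeq 0$. Up to the SoS-symmetric indexing, $X_1$ is a rescaled square flattening of the Gaussian random tensor $\bW$ on the $\binom{[n]}{2} \times \binom{[n]}{2}$ block, so standard Gaussian matrix concentration (matrix Bernstein or non-commutative Khintchine) yields $\|X_1\|_{\mathrm{op}} \lesssim \alpha n (\log n)^{O(1)}$ with high probability. Restricted to the subspace transverse to the slice-constraint ideal, the smallest eigenvalue of $X_0$ is polynomially large in $n$, so the perturbation is spectrally dominated and the sum remains positive semidefinite for $\alpha \lesssim 1/(n(\log n)^{O(1)})$. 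Matching the two requirements on $\alpha$ pins down the exponent $\gamma$ as a specific $O(1)$ constant, producing the polylogarithmic slack between the upper and lower bounds in Theorem~\ref{thm:C}. The remaining tasks --- checking SoS-symmetry, the hypercube identities, and the annihilation of the slice ideal after the linear correction --- are routine bookkeeping that do not affect the orders of magnitude, and mirror the corresponding steps in \cite{Hopkins2015, Bhattiprolu2016}.
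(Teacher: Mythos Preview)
Your high-level strategy --- perturb the uniform-on-the-slice moment matrix $X_0$ by something correlated with $\bW$, then argue via matrix concentration that the perturbation is spectrally small --- is exactly the paper's approach and is correct in outline. The decomposition of $\Etilde[f]-f(y)$, the bound on $\ip{\bW,y^{\otimes 4}}$, and the use of $\Etilde[(y^Tx)^4]\geq 0$ all match the paper's reduction to Lemma~\ref{lem:C}.

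There is, however, a genuine gap in your construction of $X_1$. You propose to set only the degree-$(2,2)$ block (disjoint pairs) equal to the symmetrized noise and then fix up the slice constraint by a ``linear correction on lower-degree entries.'' That cannot work: the constraint $\Etilde[x_R\sum_i x_i]=0$ for $|R|=3$ reads
\[
\sum_{j\in R}\Etilde[x_{R\setminus\{j\}}]\;=\;-\sum_{j\notin R}\Etilde[x_{R\cup\{j\}}],
\]
which is a system of $\binom{n}{3}$ linear equations in the $\binom{n}{2}$ degree-$2$ unknowns with the degree-$4$ values already fixed. In Johnson-scheme language, the right-hand side (a ``down'' map applied to a generic Gaussian vector on $4$-sets) has a nonzero component in the top harmonic space $V_3$ of functions on $3$-sets, which lies outside the image of the inclusion map from $2$-sets. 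So for Gaussian $\bW$ the system is almost surely inconsistent; you must perturb the degree-$4$ values themselves. The paper does exactly this by replacing your ad hoc correction with the \emph{full} orthogonal projection $\Pi$ onto the null space of the constraint matrix $A$, and then uses the Terwilliger-type algebra $\cA$ (Theorem~\ref{thm:blkdiag}) to block-diagonalize everything. That machinery is what lets one (i) verify that the projection does not kill the $\Theta(n^4)$ correlation (via an explicit trace computation) and (ii) bound $\|\Sigma_X\|$ and hence $\|X_{\psi_1'}\|$ precisely. These are the main technical steps, not routine bookkeeping.

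A smaller point: the smallest nonzero eigenvalue of $X_0$ is not polynomially large; it is $1-o_n(1)$ (see the end of the paper's spectrum analysis). This does not change the final orders of magnitude, since your operator-norm estimate $\|X_1\|\lesssim \alpha n(\log n)^{O(1)}$ already forces $\alpha\lesssim 1/(n(\log n)^{O(1)})$, but it is worth getting right.
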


\subsection{Proof of Theorem \ref{thm:C2}}

Let $g(x)$ be the noise part of $f(x)$, i.e., $g(x) = \ip{\bW, x^{\otimes 4}}$. Let $\Etilde$ be a pseudo-expectation of degree 4 on the hypercube which satisfies the equality $\sum_{i=1}^n x_i = 0$. We have $\Etilde[f] \geq \sigma \Etilde[g]$ since $\Etilde[(x^T y)^4] \geq 0$.

\begin{lemma}
\label{lem:C}
Let $g(x)$ be the polynomial as defined above. Then, there exists a pseudo-expectation of degree 4 on the hypercube satisfying $\one^T x = 0$ such that
$$
\Etilde[g] \gtrsim \frac{n^3}{(\log n)^{1/2+o(1)}}.
$$
\end{lemma}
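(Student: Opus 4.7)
The plan is to exhibit a degree-$4$ pseudo-expectation whose moment matrix is a small perturbation of the moment matrix of the uniform distribution on the slice $\{x \in \{\pm 1\}^n : \one^T x = 0\}$, with the perturbation direction carefully correlated with $\bW$. This mirrors the strategy of Hopkins--Shi--Steurer \cite{Hopkins2015} and Bhattiprolu--Guruswami--Lee \cite{Bhattiprolu2016}, adjusted to handle the hypercube and slice constraints simultaneously. The heuristic is that if $\Etilde[x_i x_j x_k x_l] \approx \alpha \widetilde{\bW}_{ijkl}$ for a suitably symmetrized $\widetilde{\bW}$, then $\Etilde[g] \approx \alpha \langle \bW, \widetilde{\bW}\rangle = \Theta(\alpha n^4)$, while the PSD constraint on the moment matrix restricts $\alpha$ to be at most a constant divided by the operator norm of a flattening of $\widetilde{\bW}$, which is $\Theta(n (\log n)^{1/2 + o(1)})$ with high probability.

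Concretely, using the matrix formulation of pseudo-expectations with moment matrix $X_{\Etilde}$ indexed by $\binom{[n]}{\leq 2}$, I would write $X_{\Etilde} = X^0 + \alpha X^W$. Here $X^0$ is the moment matrix of the uniform distribution on the slice; it is PSD, already satisfies the hypercube and slice constraints, and, being the moment matrix of a genuine distribution whose nontrivial two-point moments are all $-1/(n-1)$, its minimum eigenvalue on the feasible subspace is bounded below by a positive constant. The perturbation $X^W$ is supported on the $\binom{n}{2} \times \binom{n}{2}$ block indexed by $2$-element subsets: for indices $\{i,j,k,l\}$ all distinct, $X^W[\{i,j\},\{k,l\}]$ is the average of $\bW_{\sigma}$ over all orderings $\sigma$ of $(i,j,k,l)$ (so $X^W$ is SoS-symmetric), and for collisions one extends via the hypercube identification $x_i^2 = 1$, followed by a projection orthogonalizing against the ideal generated by $\sum_i x_i$. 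These corrections add only lower-order Frobenius mass.

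The key step is the PSD analysis. Since $X^0 \succeq c I$ on its feasible subspace for a positive constant $c$, one has $X_{\Etilde} \succeq 0$ whenever $\alpha \|X^W\| \leq c$. To bound $\|X^W\|$, observe that up to negligible corrections from collisions and slice projection, $X^W$ is the canonical flattening of a symmetrized Gaussian $4$-tensor into an $\binom{n}{2} \times \binom{n}{2}$ matrix. A net argument on the symmetric subspace, as in \cite{Hopkins2015, Bhattiprolu2016}, exploits the rotational symmetry of $\bW$ to reduce the effective dimension from $\binom{n}{2}$ to roughly $n$, yielding $\|X^W\| \lesssim n (\log n)^{1/2+o(1)}$ with high probability. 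Choosing $\alpha \asymp 1/(n (\log n)^{1/2+o(1)})$ then gives $\Etilde[g] \geq \alpha \langle \bW, \widetilde{\bW}\rangle - |\text{contribution from } X^0|$. Since $\widetilde{\bW}$ is the projection of $\bW$ onto a subspace of dimension at least a constant fraction of $n^4$, $\langle \bW, \widetilde{\bW}\rangle = \|\widetilde{\bW}\|_F^2 \gtrsim n^4$ with high probability; the $X^0$ contribution is zero in expectation and at most $O(\sqrt{n^3 \log n})$ in magnitude by Gaussian concentration, which is absorbed. This yields $\Etilde[g] \gtrsim n^3/(\log n)^{1/2+o(1)}$.

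The main obstacle is the PSD argument, where simultaneously enforcing SoS-symmetry, the hypercube identifications $x_i^2 = 1$, and the slice ideal $\sum_i x_i = 0$ without inflating $\|X^W\|$ is delicate: a sloppy slice-projection can introduce rank-one perturbations of operator norm $\Omega(n^{3/2})$ and destroy the bound. A secondary but equally technical point is extracting the sharp $(\log n)^{1/2+o(1)}$ exponent, which demands the refined spectral net argument that leverages the symmetric structure of $\widetilde{\bW}$ rather than a crude Wigner-type norm bound on an unsymmetrized unfolding.
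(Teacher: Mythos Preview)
Your overall strategy matches the paper's: take the moment matrix $X^0$ of the uniform distribution on the slice as a base, add a small perturbation correlated with $\bW$, and certify PSD-ness by bounding the operator norm of the perturbation against the smallest nonzero eigenvalue of $X^0$. The paper carries this out by first fixing $x_n=1$ to pass to $(n-1)$ variables, writing the slice constraint as $A\psi=0$, and setting $\psi=\psi_0+\frac{\epsilon}{e^Tw}\bigl(\Pi-\frac{ee^T}{e^Te}\bigr)w$ where $\Pi$ is the orthogonal projector onto $\ker A$ and $e$ is its first column; this is exactly your $X^0+\alpha X^W$ with the slice projection made explicit.

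The substantive difference is in the spectral bound on the perturbation. You invoke a net argument on the symmetric subspace exploiting the \emph{rotational} $O(n)$-symmetry of $\bW$, as in the spherical-prior analyses of \cite{Hopkins2015,Bhattiprolu2016}. On the hypercube with the balanced-slice constraint there is no $O(n)$-symmetry, only $S_n$-permutation symmetry, so that argument does not transfer directly; this is precisely the place where, as you note, a careless projection can produce $\Omega(n^{3/2})$ rank-one terms. The paper sidesteps this by observing that the variance proxy $\Sigma_X=\sum_V\bigl(\sum_U(\Pi-\tfrac{ee^T}{e^Te})_{U,V}Y_U\bigr)^2$ lies in the Terwilliger algebra of the Hamming cube, then applies Schrijver's block-diagonalization to compute $\|\Sigma_X\|\le(1/2+o(1))n^2$ exactly; Tropp's matrix Gaussian inequality then yields $\|X_{\psi_1'}\|\lesssim n\sqrt{\log n}$. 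So the paper's route replaces your net argument by an algebraic diagonalization tailored to $S_n$-symmetry, which is what makes the slice projection controllable and delivers the sharp $\sqrt{\log n}$ exponent without further work.
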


We prove Theorem \ref{thm:C2} using the lemma. 
\begin{proof}[Proof of Theorem \ref{thm:C2}]
Note that $g(y) = \ip{\bW, y^{\otimes 4}}$ is a Gaussian random variable with variance $n^4$. So, $g(y) \leq n^2 \log n$ with probability $1-o(1)$. Let $\Etilde$ be the pseudo-expectation satisfying the conditions in the lemma. Then, with probability $1-o(1)$ we have
\begin{eqnarray*}
\Etilde[f] - f(y) &=& -(n^4-\Etilde[(y^T x)^4]) + \sigma(\Etilde[g] - g(y)) \\
&\geq& -n^4 + \sigma\left(\frac{n^3}{\log n} + n^2 \log n\right) \\
&\geq& -n^4 + (1-o(1))\frac{\sigma n^3}{(\log n)^{1/2+o(1)}}.
\end{eqnarray*}
Since $\sigma > n(\log n)^{1/2+c}$ for some fixed constant $c>0$, we have $\Etilde[f] - f(y) > 0$ as desired.
\end{proof}

In the remainder of the section, we prove Lemma \ref{lem:C}.

\subsubsection{Outline}

We note that our method shares a similar idea which appears in \cite{Hopkins2015} and \cite{Bhattiprolu2016}.

We are given a random polynomial $g(x) = \ip{\bW, x^{\otimes 4}}$ where $\bW$ has independent standard Gaussian entries. We would like to construct $\Etilde = \Etilde_{\bW}$ which has large correlation with $\bW$. If we simply let 
$$
\Etilde[x_{i_1} x_{i_2} x_{i_3} x_{i_4}] = 
\frac{1}{24} \sum_{\pi \in \mathcal{S}_4} \bW_{i_{\pi(1)},i_{\pi(2)},i_{\pi(3)},i_{\pi(4)}}
$$
for $\{i_1 < i_2 < i_3 < i_4\} \subseteq [n]$ and $\Etilde[x_T]$ be zero if $|T| \leq 3$, then
$$
\Etilde[g] = \frac{1}{24} \sum_{1\leq i_1<i_2<i_3<i_4\leq n} \left(\sum_{\pi \in \mathcal{S}_4} \bW_{i_{\pi(1)},i_{\pi(2)},i_{\pi(3)},i_{\pi(4)}} \right)^2
$$
so the expectation of $\Etilde[g]$ over $\bW$ would be equal to $\binom{n}{4} \approx \frac{n^4}{24}$. However, in this case $\Etilde$ does not satisfies the equality $\one^T x = 0$ nor the conditions for pseudo-expectations.

To overcome this, we first project the $\Etilde$ constructed above to the space of linear functionals which satisfy the equality constraints ($x_i^2 = 1$ and $\one^T x = 0$). Then, we take a convex combination of the projection and a pseudo-expectation to control the spectrum of the functional. The details are following:

\begin{itemize}
\item[(1)] (Removing degeneracy) We establish the one-to-one correspondence between the collection of linear functionals on $n$-variate, even multilinear polynomials of degree at most 4 and the collection of linear functionals on $(n-1)$-variate multilinear polynomials of degree at most 4 by posing $x_n = 1$. This correspondence preserves positivity.

\item[(2)] (Description of equality constraints) Let $\psi$ be a linear functional on $(n-1)$-variate multilinear polynomials of degree at most 4. We may think $\psi$ as a vector in $\RR^{\binom{n-1}{\leq 4}}$. Then, the condition that $\psi$ satisfies $\sum_{i=1}^{n-1} x_i + 1 = 0$ can be written as $A\psi = 0$ for some matrix $A$.

\item[(3)] (Projection)
Let $w \in \RR^{\binom{n-1}{\leq 4}}$ be the coefficient vector of $g(x)$. Let $\Pi$ be the projection matrix to the space $\{x: Ax = 0\}$. In other words,
$$
\Pi = Id - A^T(AA^T)^{\dagger}A
$$
where $Id$ is the identity matrix of size $\binom{n-1}{\leq 4}$ and $(\cdot)^{\dagger}$ denotes the pseudo-inverse. Let $e$ be the first column of $\Pi$ and $\psi_1 = \frac{\Pi w}{e^T w}$. Then $(\psi_1)_\emptyset = 1$ and $A\psi_1 = 0$ by definition.

\item[(4)] (Convex combination) Let $\psi_0 = \frac{e}{e^T e}$. We note that $\psi_0$ corresponds to the expectation operator of uniform distribution on $\{x \in \{\pm 1\}^n: \one^T x = 0\}$. 

We will construct $\psi$ by
$$
\psi = (1-\epsilon)\psi_0 + \epsilon \psi_1
$$
with an appropriate constant $\epsilon$. Equivalently,
$$
\psi = \psi_0 + \frac{\epsilon}{e^T w}\cdot \left(\Pi - \frac{ee^T}{e^T e}\right) w.
$$
 
\item[(5)] (Spectrum analysis) We bound the spectrum of the functional $\left(\Pi - \frac{ee^T}{e^T e} \right) w$ to decide the size of $\epsilon$ for $\psi$ being positive semidefinite.

\end{itemize}

\subsubsection{Removing degeneracy}

Recall that
$$
g(x) = \sum_{i,j,k,l \in [n]} \bW_{ijkl} x_i x_j x_k x_\ell.
$$
Observe that $g$ is even, i.e., $g(x) = g(-x)$ for any $x \in \{\pm 1\}^n$. To maximize such an even function, we claim that we may only consider the pseudo-expectations such that whose odd moments are zero.

\begin{proposition}
Let $\Etilde$ be a pseudo-expectation of degree 4 on hypercube satisfying $\sum_{i=1}^n x_i = 0$. Let $p$ be a degree 4 multilinear polynomial which is even. Then, there exists a pseudo-expectation $\Etilde'$ of degree 4 such that $\Etilde[p]=\Etilde'[p]$ and $\Etilde'[x_S] = 0$ for any $S\subseteq [n]$ of odd size. 
\end{proposition}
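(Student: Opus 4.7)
The plan is to construct $\Etilde'$ by symmetrizing $\Etilde$ under the involution $x \mapsto -x$. Specifically, for any multilinear $q$ of degree at most $4$, I define
\[
\Etilde'[q(x)] := \tfrac{1}{2}\bigl(\Etilde[q(x)] + \Etilde[q(-x)]\bigr).
\]
This is linear in $q$, and since $q(-x)$ is again a multilinear polynomial of the same degree, the right-hand side makes sense as an evaluation of $\Etilde$. The point is that $\Etilde'$ kills odd-degree monomials automatically: for $S \subseteq [n]$ we have $\Etilde'[x_S] = \tfrac{1}{2}(1 + (-1)^{|S|})\Etilde[x_S]$, which vanishes when $|S|$ is odd. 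And when $p$ is even, $p(-x) = p(x)$, so $\Etilde'[p] = \Etilde[p]$.

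Next I would verify that $\Etilde'$ is a pseudo-expectation of degree $4$. Clearly $\Etilde'[1] = 1$. For positivity, let $q$ have degree at most $2$; then $q(-x)$ also has degree at most $2$, hence both $\Etilde[q(x)^2]$ and $\Etilde[q(-x)^2]$ are nonnegative by assumption on $\Etilde$, and so their average $\Etilde'[q^2]$ is nonnegative as well.

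Finally I would check that $\Etilde'$ satisfies the equality $\sum_{i=1}^n x_i = 0$. For any $r$ of degree at most $3$,
\[
\Etilde'\!\left[r(x) \cdot \sum_i x_i\right]
= \tfrac{1}{2}\Etilde\!\left[r(x)\sum_i x_i\right] + \tfrac{1}{2}\Etilde\!\left[r(-x)\sum_i (-x_i)\right],
\]
and each of the two terms on the right vanishes because $\Etilde$ itself satisfies $\sum_i x_i = 0$ and both $r(x)$ and $r(-x)$ are polynomials in $x$ of degree at most $3$.

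No real obstacle is expected — the argument is just the standard ``average over the group $\mathbb{Z}/2$'' trick, and the only point that needs a sentence of attention is that the symmetrized functional still respects the linear constraint $\one^T x = 0$, which follows because that constraint is itself odd under $x \mapsto -x$ and so is preserved (not broken) by the averaging.
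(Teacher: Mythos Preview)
Your proposal is correct and is essentially the same argument as the paper's: the paper defines $\Etilde_0[x_S] = (-1)^{|S|}\Etilde[x_S]$ (equivalently $\Etilde_0[q(x)] = \Etilde[q(-x)]$), checks that $\Etilde_0$ is a valid pseudo-expectation satisfying $\sum_i x_i = 0$, and then sets $\Etilde' = \tfrac{1}{2}(\Etilde + \Etilde_0)$. Your symmetrization $\Etilde'[q] = \tfrac{1}{2}(\Etilde[q(x)] + \Etilde[q(-x)])$ is exactly this construction, and your verifications of normalization, positivity, the linear constraint, and the vanishing of odd moments match the paper's line by line.
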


\begin{proof}
Let $\Etilde$ be a pseudo-expectation of degree 4 on hypercube satisfying $\sum_{i=1}^n x_i = 0$. Let us define a linear functional $\Etilde_0$ on the space of multilinear polynomials of degree at most 4 so that $\Etilde_0[x_S] = (-1)^{|S|}\Etilde[x_S]$ for any $S \in \binom{[n]}{\leq 4}$. Then, for any multilinear polynomial $q$ of degree at most 2, we have
$$
\Etilde_0[q(x)^2] = \Etilde[q(-x)^2] \geq 0.
$$
Also, $\Etilde_0$ satisfies $\Etilde_0[1] = 1$ and
$$
\Etilde_0\left[\left(\sum_{i=1}^n x_i\right)q(x)\right] = -\Etilde\left[\left(\sum_{i=1}^n x_i\right) q(-x)\right] = 0
$$
for any $q$ of degree 3. Thus, $\Etilde_0$ is a valid pseudo-expectation of degree 4 satisfying $\sum_{i=1}^n x_i = 0$. 

Let $\Etilde' = \frac{1}{2}(\Etilde + \Etilde_0)$. This is again a valid pseudo-expectation, since the space of pseudo-expectations is convex. We have $\Etilde'[p(x)] = \Etilde[p(x)] = \Etilde_0[p(x)]$ since $p$ is even, and $\Etilde'[x_S] = (1+(-1)^{|S|})\Etilde[x_S] = 0$ for any $S$ of odd size.
\end{proof}

Let $\cE$ be the space of all pseudo-expectations of degree 4 on $n$-dimensional hypercube with zero odd moments. Let $\cE'$ be the space of all pseudo-expectations of degree 4 on $(n-1)$-dimensional hypercube. We claim that there is a bijection between two spaces.

\begin{proposition}
Let $\psi \in \cE$. Let us define a linear functional $\psi'$ on the space of $(n-1)$-variate multilinear polynomials of degree at most 4 so that for any $T \subseteq [n-1]$ with $|T|\leq 4$
$$
\psi'[x_T] = \begin{cases}
\psi[x_{T \cup\{n\}}] & \text{ if $|T|$ is odd } \\
\psi[x_T] & \text{ otherwise.}
\end{cases}
$$
Then, $\psi \mapsto \psi'$ is a bijective mapping from $\cE$ to $\cE'$. 
\end{proposition}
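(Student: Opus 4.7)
The plan is to exhibit an explicit inverse to the candidate map $\psi \mapsto \psi'$ and to verify that both maps respect the pseudo-expectation axioms. The structural observation underlying the bijection is that an element of $\cE$ is determined by its even moments, and the assignment $S \mapsto S \cap [n-1]$ restricts to a bijection between even subsets of $[n]$ and arbitrary subsets of $[n-1]$; the formula in the proposition is exactly this correspondence recorded at the level of moments.

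First I would check that $\psi' \in \cE'$ whenever $\psi \in \cE$. The normalization $\psi'[1] = \psi[1] = 1$ is immediate from the even case of the formula. For positivity, given $q \in \RR[x_1,\ldots,x_{n-1}]$ of degree at most $2$, I would split $q = q_e + q_o$ into its even and odd parts and define the lift $\widetilde q(x_1,\ldots,x_n) := q_e + x_n\, q_o$, which still has degree at most $2$ since $\deg q_o \leq 1$. Using $x_n^2 = 1$ one obtains $\widetilde q^2 = q_e^2 + q_o^2 + 2 x_n q_e q_o$ on the hypercube; matching monomials against the defining formula of $\psi'$ then gives $\psi[\widetilde q^2] = \psi'[q^2]$ (the even-degree monomials in $q_e^2 + q_o^2$ pair with even moments of $\psi$ directly, while the $2 x_n q_e q_o$ piece pairs with even moments of $\psi$ containing the index $n$, which by the forward formula match the odd moments of $\psi'$). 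Since $\psi$ is a degree-$4$ pseudo-expectation and $\deg \widetilde q \leq 2$, this gives $\psi'[q^2] \geq 0$.

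For the inverse, given $\psi' \in \cE'$ I would define a linear functional $\psi$ on $n$-variate multilinear polynomials of degree at most $4$ by
$$\psi[f] := \psi'\bigl[\tfrac12\bigl(f(x_1,\ldots,x_{n-1}, 1) + f(-x_1,\ldots,-x_{n-1},-1)\bigr)\bigr].$$
Evaluating on a monomial $x_S$ shows directly that $\psi[x_S] = \psi'[x_{S \cap [n-1]}]$ when $|S|$ is even and $\psi[x_S] = 0$ when $|S|$ is odd, so $\psi$ has zero odd moments and its image under the forward map is exactly $\psi'$. Positivity is transparent from the averaged representation: for $r$ of degree at most $2$ in $x_1,\ldots,x_n$, each of $r_\pm := r(\pm x_1,\ldots,\pm x_{n-1}, \pm 1)$ has degree at most $2$ in $(n-1)$ variables, so $\psi[r^2] = \tfrac12(\psi'[r_+^2] + \psi'[r_-^2]) \geq 0$, and $\psi[1] = 1$. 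Together with the monomial calculation this shows the two maps are mutually inverse.

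The main obstacle is choosing the correct lift $\widetilde q$ in the forward positivity check: a naive lift of $q$ to $n$ variables would break the square structure and prevent use of $\psi$'s positivity, whereas replacing the odd part of $q$ by $x_n$ times itself is exactly what makes $\widetilde q^2$ pair cleanly with $q^2$ across the two pseudo-expectations. Everything else reduces to mechanical verification on multilinear monomials.
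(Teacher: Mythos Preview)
Your proof is correct and follows essentially the same route as the paper: the forward positivity check via the lift $\widetilde q = q_e + x_n q_o$ is exactly the paper's argument, and your inverse map $\psi[x_S] = \psi'[x_{S\cap[n-1]}]$ for even $|S|$ (and $0$ otherwise) is the same as the paper's. Your averaging formula $\psi[f] = \psi'\bigl[\tfrac12(f(x,1)+f(-x,-1))\bigr]$ gives a slightly cleaner verification of positivity in the reverse direction than the paper's four-part decomposition $q = q_{00}+q_{01}+x_n q_{10}+x_n q_{11}$, but the underlying idea is identical.
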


\begin{proof}
We say linear functional $\psi$ on the space of polynomials of degree at most $2\ell$ is \emph{positive semidefinite} if $\psi[q^2] \geq 0$ for any $q$ of degree $\ell$. 

Note that the mapping $\psi' \mapsto \psi$ where $\psi[x_S] = \psi'[x_{S \setminus \{n\}}]$ for any $S \subseteq [n]$ of even size is the inverse of $\psi \mapsto \psi'$. Hence, it is sufficient to prove that $\psi$ is positive semidefinite if and only if $\psi'$ is positive semidefinite.

\noindent ($\Rightarrow$) Let $q$ be an $n$-variate polynomial of degree at most 2. Let $q_0$ and $q_1$ be polynomials in $x_1,\cdots,x_{n-1}$ such that
$$
q(x_1,\cdots,x_n) = q_0(x_1,\cdots,x_{n-1}) + x_n q_1(x_1,\cdots,x_{n-1}).
$$ 
We get
$\psi'[q^2] = \psi'[(q_0+x_n q_1)^2] = \psi'[(q_0^2+q_1^2)+2x_nq_0 q_1]$.
For $i=1,2$, let $q_{i0}$ and $q_{i1}$ be the even part and the odd part of $q_i$, respectively. Then we have
\begin{eqnarray*}
\psi'[q^2] &=& \psi'[(q_{00}^2+q_{01}^2+q_{10}^2+q_{11}^2)+2x_n (q_{00} q_{11}+q_{01}q_{10})] \\
&=& \psi[(q_{00}^2+q_{01}^2+q_{10}^2+q_{11}^2)+2(q_{00} q_{11}+q_{01}q_{10})] \\
&=& \psi[(q_{00}+q_{11})^2 + (q_{10}+q_{01})^2] \geq 0.
\end{eqnarray*}
The first equality follows from that $\psi'[q] = 0$ for odd $q$. Hence, $\psi'$ is positive semidefinite. 

\noindent ($\Leftarrow$) Let $q$ be an $(n-1)$-variate polynomial of degree at most 2. Let $q_0$ and $q_1$ be the even part and the odd part of $q$, respectively. Then,
$$
\psi[q^2] = \psi[(q_0^2 + q_1^2) + 2q_0 q_1]. 
$$
Note that $q_0^2+q_1^2$ is even and $q_0 q_1$ is odd. So,
$$
\psi[q^2] = \psi'[(q_0^2+q_1^2) + 2x_n q_0 q_1] = \psi'[(q_0 + x_n q_1)^2] \geq 0.
$$
Hence $\psi$ is positive semidefinite. 
\end{proof}

In addition to the proposition, we note that $\psi$ satisfies $\sum_{i=1}^n x_i = 0$ if and only if $\psi'$ satisfies $1+ \sum_{i=1}^{n-1} x_i = 0$. Hence, maximizing $\Etilde[g]$ over $\Etilde \in \cE$ satisfying $\sum_{i=1}^n x_i = 0$ is equivalent to 
$$
\max_{\psi' \in \cE'} \psi'[g'] \quad \text{subject to} \quad \psi' \text{ satisfies } 1+\sum_{i=1}^{n-1} x_i = 0,
$$
where $g'(x_1,\cdots,x_{n-1}) = g(x_1,\cdots,x_{n-1},1)$. 

\subsubsection{Matrix expression of linear constraints}

Let $\cF$ be the set of linear functional on the space of $(n-1)$-variate multilinear polynomials of degree at most 4. We often regard a functional $\psi \in \cF$ as a $\binom{n-1}{\leq 4}$ dimensional vector with entries $\psi_{S} = \psi[x_S]$ where $S$ is a subset of $[n-1]$ of size at most 4. The space $\cE'$ of pseudo-expectations of degree 4 (on $(n-1)$-variate multilinear polynomials) is a convex subset of $\cF$. 

Observe that $\psi \in \cF$ satisfies $1+\sum_{i=1}^{n-1} x_i = 0$ if and only if 
$$
\psi\left[\left(1+\sum_{i=1}^{n-1} x_i\right) x_S\right] = 0
$$
for any $S \subseteq [n-1]$ with $|S| \leq 3$. 

Let $s$, $t$ and $u$ be integers such that $0 \leq s, t \leq 4$ and $0 \leq u \leq \min(s,t)$. Let $M_{s,t}^u$ be the matrix of size $\binom{n-1}{\leq 4}$ such that 
$$
(M_{s,t}^u)_{S,T} = \begin{cases}
1 & \text{ if $|S|=s$, $|T|=t$, and $|S \cap T| = u$} \\
0 & \text{ otherwise}
\end{cases}
$$
for $S, T \in \binom{[n-1]}{\leq 4}$. Then, the condition that $\psi \in \cF$ satisfying $1+\sum_{i=1}^{n-1} x_i = 0$ can be written as $A\psi = 0$ where
$$
A = M_{0,0}^0 + M_{0,1}^{0} + \sum_{s=1}^3 (M_{s,s-1}^{s-1} + M_{s,s}^s + M_{s,s+1}^s).
$$

\subsubsection{Algebra generated by $M_{s,t}^u$}

Let $m$ be a positive integer greater than 8. For nonnegative integers $s,t,u$, let $M_{s,t}^u$ be the $\binom{m}{\leq 4} \times \binom{m}{\leq 4}$ matrix with
$$
(M_{s,t}^u)_{S,T} = \begin{cases}
1 & \text{ if $|S|=s$, $|T|=t$, and $|S\cap T|=u$}\\
0 & \text{ otherwise,}
\end{cases}
$$
for $S,T \subseteq [m]$ with $|S|,|T| \leq 4$. Let $\cA$ be the algebra of matrices
$$
\sum_{0 \leq s,t\leq 4} \sum_{u=0}^{s \wedge t} x_{s,t}^u M_{s,t}^u
$$
with complex numbers $x_{s,t}^u$. This algebra $\cA$ is a $C^*$-algebra: it is a complex algebra which is closed under taking complex conjugate. $\cA$ is a subalgebra of the Terwilliger algebra of the Hamming cube $H(m,2)$ \cite{terwilliger1992subconstituent}, \cite{schrijver2005new}. 

Note that $\cA$ has dimension $55$ which is the number of triples $(s,t,u)$ with $0\leq s,t \leq 4$ and $0 \leq u \leq s \wedge t$. 

Define
$$
\beta_{s,t,r}^u := \sum_{p=0}^{s \wedge t} (-1)^{p-t} \binom{p}{u} \binom{m-2r}{p-r} \binom{m-r-p}{s-p} \binom{m-r-p}{t-p}
$$
for $0 \leq s,t \leq 4$ and $0 \leq r,u \leq s \wedge t$.
The following theorem says that matrices in the algebra $\cA$ can be written in a block-diagonal form with small sized blocks. 

\begin{theorem}[\cite{schrijver2005new}]
\label{thm:blkdiag}
There exists an orthogonal $\binom{m}{\leq 4} \times \binom{m}{\leq 4}$ matrix $U$ such that for $M \in \cA$ with 
$$
M = \sum_{s,t=0}^4\sum_{u=0}^{s\wedge t} x_{s,t}^u M_{s,t}^u,
$$
the matrix $U^T M U$ is equal to the matrix
$$
\begin{pmatrix}
C_0 & 0 & 0 & 0 & 0 \\
0 & C_1 & 0 & 0 & 0 \\
0 & 0 & C_2 & 0 & 0 \\
0 & 0 & 0 & C_3 & 0 \\
0 & 0 & 0 & 0 & C_4
\end{pmatrix}
$$
where each $C_r$ is a block diagonal matrix with $\binom{m}{r} - \binom{m}{r-1}$ repeated, identical blocks of order $5-r$:
$$
C_r = \begin{pmatrix}
B_r & 0 & \cdots & 0 \\
0 & B_r & \cdots & 0 \\
\vdots & \vdots & \ddots & \vdots \\
0 & 0 & \cdots & B_r
\end{pmatrix},
$$
and 
$$
B_r = \left(\sum_{u} \binom{m-2r}{s-r}^{-1/2}\binom{m-2r}{t-r}^{-1/2} \beta_{s,t,r}^u x_{s,t}^u\right)_{s,t=r}^{4}.
$$
\end{theorem}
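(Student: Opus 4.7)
The plan is to exploit the fact that the algebra $\cA$ is (a subalgebra of) the Terwilliger algebra of the binary Hamming scheme, equivalently the centralizer algebra of the natural action of the symmetric group $S_m$ on $\mathbb{C}^{\binom{[m]}{\leq 4}}$. Concretely, each $M_{s,t}^u$ is $S_m$-invariant under simultaneous permutation of rows and columns, and conversely any $S_m$-invariant matrix on this index set is a linear combination of the $M_{s,t}^u$. Thus by Artin--Wedderburn, $\cA \cong \mathrm{End}_{S_m}(V)$ with $V := \mathbb{C}^{\binom{[m]}{\leq 4}}$ decomposes as a direct sum of full matrix algebras, one summand per isotypic component of $V$, with block size equal to the multiplicity of the corresponding irreducible and multiplicity of each block equal to the dimension of the corresponding irreducible.

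First I would compute the $S_m$-decomposition of $V$. The layer $V_s := \mathbb{C}^{\binom{[m]}{s}}$ is a standard permutation module, and James's decomposition gives $V_s = \bigoplus_{r=0}^{s} W_r$, where $W_r$ is the Specht module indexed by the partition $(m-r, r)$, of dimension $\binom{m}{r} - \binom{m}{r-1}$. Summing over $s \in \{0,1,2,3,4\}$ shows $W_r$ appears in $V$ with multiplicity $5-r$ for $r = 0,\dotsc,4$. Hence $\cA \cong \bigoplus_{r=0}^{4} \mathrm{Mat}_{5-r}(\mathbb{C})$, and choosing any $S_m$-equivariant orthonormal basis of $V$ compatible with the isotypic decomposition produces an orthogonal matrix $U$ that puts every element of $\cA$ into the claimed block form, with $\dim W_r$ identical copies of a $(5-r)\times(5-r)$ block $B_r$.

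Next I would pin down the entries of $B_r$ explicitly. For each $s \in \{r, r+1,\dotsc,4\}$, let $\pi_{s,r}: V_s \to W_r$ denote the orthogonal projection onto the canonical copy of $W_r$ inside $V_s$. Because the row/column permutations act transitively on $\binom{[m]}{s}$, $\pi_{s,r}$ can be written as a linear combination $\sum_{j=0}^{r} c_{s,r,j}\, L^{(s)}_{j}$ where $L^{(s)}_{j}$ is the matrix on $V_s$ with a $1$ in position $(S,S')$ iff $|S \cap S'| = s - j$, and the scalars $c_{s,r,j}$ are determined by orthogonality (via dual Hahn polynomials). The $(s,t)$-entry of $B_r$ is the matrix element $\langle \pi_{s,r}(e_S), M \pi_{t,r}(e_T)\rangle$ appropriately normalized; by $S_m$-invariance it depends on $(s,t,r)$ and the coefficient data $x_{s,t}^u$ only. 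Carrying out the inclusion--exclusion over the joint intersection pattern $(|S|, |T|, |S \cap T|)$ while absorbing the normalization $\binom{m-2r}{s-r}^{-1/2}\binom{m-2r}{t-r}^{-1/2}$ produces exactly the coefficient $\beta_{s,t,r}^u$ appearing in the statement.

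The main obstacle is the last step: verifying the closed-form combinatorial identity $\beta_{s,t,r}^u = \sum_{p} (-1)^{p-t}\binom{p}{u}\binom{m-2r}{p-r}\binom{m-r-p}{s-p}\binom{m-r-p}{t-p}$. This is where the argument leaves pure representation theory and becomes an identity about dual Hahn/Eberlein polynomials. I would approach it by expanding both sides in the basis $\{M_{s,t}^u\}$ and matching coefficients, using the multiplication formula $M_{s,t}^u M_{t,v}^{u'} = \sum_{u''} \gamma(s,t,v; u,u',u'')\, M_{s,v}^{u''}$ (which is a finite sum with explicit combinatorial coefficients) to reduce the identity to a single triple-sum identity that can be proved via Vandermonde--Chu or by an elementary double-counting of the objects "pairs $(S, T)$ with $|S|=s, |T|=t, |S \cap T|=u$ together with a chosen subset of $S \cap T$ of size $r$." Once this identity is verified, orthogonality of the projections $\pi_{s,r}$ across different $r$ automatically forces the off-block entries of $U^T M U$ to vanish, completing the proof.
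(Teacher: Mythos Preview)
The paper does not contain a proof of this theorem at all: it is quoted verbatim as a result of Schrijver \cite{schrijver2005new} and used as a black box in the spectrum analysis that follows. So there is no ``paper's own proof'' to compare your proposal against.

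That said, your outline is essentially the standard route to this result and is the one Schrijver himself follows. The identification of $\cA$ with the commutant of the $S_m$-action on $\bigoplus_{s=0}^4 \mathbb{C}^{\binom{[m]}{s}}$, the James decomposition $V_s \cong \bigoplus_{r=0}^s W_{(m-r,r)}$ with $\dim W_{(m-r,r)} = \binom{m}{r}-\binom{m}{r-1}$, and the resulting multiplicity count $5-r$ are all correct. The only place your sketch is soft is exactly where you flag it: deriving the explicit closed form for $\beta_{s,t,r}^u$. In Schrijver's paper this is done not by a separate combinatorial identity but by choosing an explicit orthonormal basis adapted to each copy of $W_r$ inside $V_s$ (built from signed sums over sets containing a fixed $r$-set) and then directly computing $\langle v_{s,r}, M_{s,t}^u\, v_{t,r}\rangle$; the alternating sign and the three binomial factors fall out of an inclusion--exclusion over the size $p$ of the intersection of the running set with a fixed reference set. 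Your proposed reduction via the structure constants of the $M_{s,t}^u$ would also work but is more laborious than simply writing down the basis vectors and evaluating.
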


For brevity, let us denote this block-diagonalization of $M$ by the tuple of matrices $(B_0,B_1,B_2,B_3,B_4)$ with $(5-r) \times (5-r)$ matrix $B_r$'s. 

\subsubsection{Projection}

Recall that $g'(x_1,\cdots,x_{n-1})$ is equal to
$$
\sum_{i,j,k,\ell \in [n]} \bW_{ijk\ell} x_{\{i,j,k,\ell\}\setminus \{n\}}.
$$
Let $c \in \RR^{\binom{n-1}{\leq 4}}$ be the coefficient vector of $g'$. Since entries of $\bW$ are independent Standard Gaussians, $c$ is a Gaussian vector with a diagonal covariance matrix $\Sigma = \EE[cc^T]$ where
$$
\Sigma_{S,S} = \begin{cases}
n & \text{ if $|S| = 0$} \\
12n-16 & \text{ if $|S|=1$ or $|S|=2$} \\
24 & \text{ if $|S| = 3$ or $|S|=4$.}
\end{cases}
$$
Let $w = \Sigma^{-1/2} c$. By definition, $w$ is a random vector with i.i.d. standard normal entries. 

Let $\Pi = Id - A^T (AA^T)^{+} A$ where $(AA^T)^+$ is the Moore-Penrose pseudoinverse of $AA^T$ and $Id$ is the identity matrix of order $\binom{n-1}{\leq 4}$. Then, $\Pi$ is the orthogonal projection matrix onto the nullspace of $A$. Since $A$, $A^T$ and $Id$ are all in the algebra $\cA$, the projection matrix $\Pi$ is also in $\cA$.

Let $e$ be the first column of $\Pi$ and  
$$
\psi_0 := \frac{e}{e^T e} \quad \text{and} \quad \psi_1 := \frac{\Pi w}{e^T w}. 
$$
We have $A\psi_0 = A\psi_1 = 0$ by definition of $\Pi$, and $(\psi_0)_\emptyset = (\psi_1)_\emptyset = 1$ since $(\Pi w)_\emptyset = e^T w$.

Let $\epsilon$ be a real number with $0 < \epsilon < 1$ and $\psi = (1-\epsilon)\psi_0 + \epsilon \psi_1$. This functional still satisfies $A\psi = 0$ and $\psi_\emptyset = 1$, regardless of the choice of $\epsilon$. We would like to choose $\epsilon$ such that $\psi$ is positive semidefinite with high probability.

\subsubsection{Spectrum of $\psi$}

Consider the functional $\psi_0 = \frac{e}{e^T e}$. It has entries
$$
(\psi_0)_S = \begin{cases}
1 & \text{ if $S = \emptyset$} \\
-\frac{1}{n-1} & \text{ if $|S|=1$ or 2} \\
\frac{3}{(n-1)(n-3)} & \text{ if $|S|=3$ or 4},
\end{cases}
$$
for $S \subseteq [n-1]$ of size at most 4. We note that this functional corresponds to the degree 4 or less moments of the uniform distribution on the set of vectors $x \in \{\pm 1\}^{n-1}$ satisfying $\sum_{i=1}^{n-1} x_i + 1 = 0$.

\begin{proposition}
Let $\psi$ be a vector in $\RR^{\binom{n-1}{\leq 4}}$ such that $A\psi = 0$ and $p$ be an $(n-1)$-variate multilinear polynomial of degree at most 2. Suppose that $\psi_0[p^2] = 0$. Then, $\psi[p^2] = 0$. 
\end{proposition}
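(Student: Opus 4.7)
The key observation is that $\psi_0$ matches the degree-$\leq 4$ moments of the uniform distribution on the slice $S := \{x \in \{\pm 1\}^{n-1} : 1 + \sum_i x_i = 0\}$, so the hypothesis $\psi_0[p^2] = 0$ together with the pointwise non-negativity of $p^2$ forces $p$ to vanish identically on $S$. My plan is to factor $p$ as a degree-controlled multiple of $1 + \sum_i x_i$ and then invoke $A\psi = 0$.

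Concretely, I claim that any multilinear polynomial $p$ of degree at most $2$ vanishing on $S$ admits a factorization $p \equiv (1 + \sum_i x_i)\, q \pmod{x_i^2 - 1}$ for some multilinear $q$ of degree at most $1$. Let $R_{\leq k}$ denote the space of multilinear polynomials in $x_1, \dotsc, x_{n-1}$ of degree at most $k$, and consider the map $\Phi : R_{\leq 1} \to R_{\leq 2}$ given by $q \mapsto (1 + \sum_i x_i)\, q$ reduced modulo $x_i^2 - 1$. Its image lies in $I := \{f \in R_{\leq 2} : f|_S = 0\}$, and I would verify $\dim R_{\leq 1} = \dim I = n$. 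Injectivity of $\Phi$ follows since $(1 + \sum_i x_i)\, q \equiv 0$ forces $q$ to vanish on all off-slice points, and for $n \geq 4$ the $n$ off-slice points $\one$ and $\one - 2 e_i$ for $i = 1,\dotsc, n-1$ already pin down a degree-$\leq 1$ multilinear polynomial uniquely. For $\dim I$, rank-nullity gives $\dim I = (1 + \binom{n}{2}) - \binom{n-1}{2} = n$, where the rank $\binom{n-1}{2}$ of the restriction $R_{\leq 2} \to \RR^{S}$ follows from the $S_{n-1}$-equivariant decomposition of slice-functions into Specht modules (only the harmonic components of degree $\leq 2$ from $R_{\leq 2}$ contribute). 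Matching dimensions and $\mathrm{image}(\Phi) \subseteq I$ then give $\mathrm{image}(\Phi) = I$.

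Given the factorization with $\deg q \leq 1$, multiplying by $p$ yields $p^2 \equiv (1 + \sum_i x_i)(pq) \pmod{x_i^2 - 1}$, where $pq$ is multilinear of degree at most $3$ after reduction. The constraint $A\psi = 0$ encodes the family of identities $\psi\bigl[(1 + \sum_i x_i)\, x_T\bigr] = 0$ for all $T \subseteq [n-1]$ with $|T| \leq 3$, so by linearity $\psi\bigl[(1 + \sum_i x_i)\, h\bigr] = 0$ for every multilinear $h$ of degree at most $3$. Applying this with $h = pq$ concludes $\psi[p^2] = 0$. The main obstacle is the degree-control step: knowing only that $p$ lies in the ideal of $S$ inside the ring $R = \RR[x]/(x_i^2 - 1)$ gives a factorization $p \equiv (1 + \sum_i x_i)\, \tilde q$ with $\tilde q$ of unspecified degree, and it is essential to pin down $\deg \tilde q \leq 1$ so that $p\tilde q$ has degree at most $3$ and falls within the scope of $A\psi = 0$; the dimension count via the representation-theoretic analysis of polynomials on the slice is exactly what secures this bound.
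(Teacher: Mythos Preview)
Your argument is correct. Both proofs start from the same observation---that $\psi_0[p^2]=0$ forces $p$ to vanish on the slice $\cU=\{x\in\{\pm 1\}^{n-1}:1+\sum_i x_i=0\}$---but diverge in how they pass from ``$p$ vanishes on $\cU$'' to ``$\psi[p^2]=0$''. The paper works dually: it asserts that any $\psi$ with $A\psi=0$ is a linear combination of the point-evaluation functionals $\{f\mapsto f(x):x\in\cU\}$, which immediately kills $\psi[p^2]$. You instead work primally: you prove the degree-control statement that any degree-$\leq 2$ multilinear $p$ vanishing on $\cU$ factors as $(1+\sum_i x_i)q$ with $\deg q\leq 1$, whence $p^2\equiv (1+\sum_i x_i)(pq)$ with $\deg(pq)\leq 3$, and $A\psi=0$ applies directly. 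These two routes are equivalent in content (the paper's dual assertion is equivalent to degree control at level $4$, and your factorization of $p$ gives exactly the needed factorization of $p^2$), but your version actually establishes the needed degree bound via the Specht-module dimension count and the injectivity check, whereas the paper leaves that step implicit. Your approach is therefore a bit longer but more self-contained.
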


\begin{proof}
Let $\cU = \{x \in \{\pm 1\}^{n-1}: \sum_{i=1}^{n-1} x_i + 1 = 0\}$. Note that $\psi_0$ is the expectation functional of the uniform distribution on $\cU$ as we seen above. Hence, $\psi_0[p^2] = 0$ if and only if $p(x)^2 = 0$ for any $x \in \cU$. 

On the other hand, the functional $\psi$ is a linear combination of functionals $\{p \mapsto p(x): x\in \cU\}$ since $A\psi = 0$.  Hence, if $\psi_0[p^2] = 0$ then $\psi[p^2] = 0$ as $p(x)^2 = 0$ for any $x \in \cU$.
\end{proof}

Recall that $\psi = (1-\epsilon)\psi_0 + \epsilon \psi_1$ where
$\psi_0 = \frac{e}{e^Te}$ and $\psi_1 = \frac{\Pi w}{e^T w}$. Let $\psi_1' = e^T w \cdot (\psi_1 - \psi_0)$. Then,
\begin{eqnarray*}
\psi_1' &=& \Pi w - \frac{e^T w}{e^T e} e \\
&=& \left(\Pi - \frac{ee^T}{e^Te}\right) w
\end{eqnarray*}
and $\psi = \psi_0 + \frac{\epsilon}{e^T w} \psi_1'$. We note that $A\psi_1' = 0$ since $\psi_1'$ is a linear combination of $\psi_0$ and $\psi_1$.

Let $X_{\psi_0}$ and $X_{\psi_1'}$ be the moment matrix of $\psi_0$ and $\psi_1'$ respectively. Let $X_{\psi}$ be the moment matrix of $\psi$. Clearly,
$$
X_{\psi} = X_{\psi_0} + \frac{\epsilon}{e^T w} X_{\psi_1'}.
$$
Moreover, for any $p \in \RR^{\binom{n-1}{\leq 2}}$ satisfying $X_{\psi_0} p = 0$, we have $X_{\psi_1'}p = 0$ by the proposition. Hence, $X_{\psi} \succeq 0$ if
$$
\frac{\epsilon}{|e^T w|} \|X_{\psi_1'}\| \leq \lambda_{\min,\neq 0}(X_{\psi_0})
$$
where $\lambda_{\min,\neq 0}$ denotes the minimum nonzero eigenvalue. 

We note that $e^T w$ and $\| X_{\psi_1'}\|$ are independent random variables. It follows from that $w$ is a gaussian vector with i.i.d. standard entries, and that $e$ and $\left(\Pi - \frac{ee^T}{e^T e}\right)$ are orthogonal. Hence, we can safely bound $e^T w$ and $\| X_{\psi_1'}\|$ separately. 

To bound $\|X_{\psi_1'}\|$ we need the following theorem.

\begin{theorem}[Matrix Gaussian (\cite{tropp2012user})]
\label{thm:matgauss}
Let $\{A_k\}$ be a finite sequence of fixed, symmetric matrices with dimension $d$, and let $\{\xi_k\}$ be a finite sequence of independent standard normal random variables. Then, for any $t \geq 0$,
$$
\Pr\left[\left\|\sum_k \xi_k A_k \right\| \geq t\right] \leq d \cdot e^{-t^2/2\sigma^2} \quad \text{where} \quad \sigma^2 := \left\|\sum_k A_k^2 \right\|.
$$
\end{theorem}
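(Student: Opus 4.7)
The plan is to follow the matrix Laplace transform / Ahlswede--Winter approach as refined by Tropp, which reduces the tail bound for $\|X\|$ with $X = \sum_k \xi_k A_k$ to a bound on the trace of the matrix moment generating function. Since $\|X\| = \max(\lambda_{\max}(X), \lambda_{\max}(-X))$ and the sequence $\{-A_k\}$ satisfies the same hypotheses as $\{A_k\}$ (with identical $\sigma^2$), it suffices to prove $\Pr[\lambda_{\max}(X) \geq t] \leq d \cdot e^{-t^2/2\sigma^2}$ and then apply a union bound, which only costs a factor of $2$ that can be absorbed (or one gets exactly the claimed bound by working with the symmetrized event).

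The first step is the matrix Markov / Laplace inequality: for any $\theta > 0$,
\begin{equation*}
\Pr[\lambda_{\max}(X) \geq t] = \Pr[e^{\theta \lambda_{\max}(X)} \geq e^{\theta t}] \leq e^{-\theta t} \cdot \EE \tr\bigl(e^{\theta X}\bigr),
\end{equation*}
using $e^{\theta \lambda_{\max}(X)} = \lambda_{\max}(e^{\theta X}) \leq \tr(e^{\theta X})$ for any symmetric $X$. The main obstacle, and the technical heart of Tropp's user's guide, is now bounding $\EE \tr(e^{\theta X})$ when $X$ is a sum of independent (non-commuting) matrices: the naive identity $\EE e^{\theta \sum_k \xi_k A_k} = \prod_k \EE e^{\theta \xi_k A_k}$ fails because the $A_k$ do not commute.

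The key tool I would invoke is Lieb's concavity theorem, which states that for any fixed symmetric $H$, the map $A \mapsto \tr \exp(H + \log A)$ is concave on the cone of positive definite matrices. Applied to $A = e^{\theta \xi_k A_k}$ with Jensen's inequality, this yields the subadditivity estimate
\begin{equation*}
\EE_{\xi_k}\, \tr \exp\!\Bigl(H + \theta \xi_k A_k\Bigr) \leq \tr \exp\!\Bigl(H + \log \EE_{\xi_k} e^{\theta \xi_k A_k}\Bigr).
\end{equation*}
For a standard normal $\xi_k$, a direct scalar-functional-calculus computation gives $\EE e^{\theta \xi_k A_k} = e^{\theta^2 A_k^2 / 2}$, so the log inside is exactly $\frac{\theta^2}{2} A_k^2$. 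Peeling off the $\xi_k$'s one at a time by conditioning, I obtain
\begin{equation*}
\EE \tr \exp\!\Bigl(\theta \sum_k \xi_k A_k\Bigr) \leq \tr \exp\!\Bigl(\tfrac{\theta^2}{2} \sum_k A_k^2 \Bigr).
\end{equation*}

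Finally, since $\sum_k A_k^2 \succeq 0$, we have $\tr \exp(\tfrac{\theta^2}{2} \sum_k A_k^2) \leq d \cdot \exp(\tfrac{\theta^2}{2}\|\sum_k A_k^2\|) = d \cdot e^{\theta^2 \sigma^2/2}$. Combining with the Laplace inequality,
\begin{equation*}
\Pr[\lambda_{\max}(X) \geq t] \leq d \cdot \exp\!\Bigl(-\theta t + \tfrac{\theta^2 \sigma^2}{2}\Bigr),
\end{equation*}
and optimizing over $\theta > 0$ by choosing $\theta = t/\sigma^2$ gives the desired exponent $-t^2/(2\sigma^2)$. Applying the same argument to $-X$ and a union bound handles the two-sided norm. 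The only genuinely nontrivial ingredient is Lieb's concavity theorem; everything else is scalar calculus with the matrix exponential and a one-line optimization of the Chernoff parameter.
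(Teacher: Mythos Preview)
The paper does not prove this theorem; it is quoted without proof from Tropp's \emph{User-Friendly Tail Bounds for Sums of Random Matrices} and used as a black box in the spectrum analysis of $X_{\psi_1'}$. Your sketch is exactly Tropp's own argument (matrix Laplace transform, Lieb's concavity to peel off the independent summands, the scalar Gaussian mgf identity $\EE e^{\theta \xi A} = e^{\theta^2 A^2/2}$, then optimization of the Chernoff parameter), so there is nothing to compare against here.

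One small remark: your hand-wave about the factor of $2$ from the two-sided union bound is the only loose thread. As stated in the paper the constant is $d$, not $2d$; Tropp obtains this because for a Gaussian series $-X \stackrel{d}{=} X$, so one can work directly with $\lambda_{\max}(X)$ and note that the bound already controls the full operator norm up to the same constant (in some formulations the $2$ is indeed present and then absorbed into the dimension factor). This is a cosmetic point and does not affect the use made of the theorem in the paper.
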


For each $U \subseteq [n-1]$ with size at most 4, let $Y_U$ be the $\binom{n-1}{\leq 2} \times \binom{n-1}{\leq 2}$ matrix with entries
$$
(Y_U)_{S,T} = \begin{cases}
1 & \text{ if $S \oplus T = U$ } \\
0 & \text{ otherwise.}
\end{cases}
$$
We can write $X_{\psi_1'}$ as
$$
X_{\psi_1'} = \sum_{\substack{U \subseteq [n-1]\\ |U|\leq 4}} (\psi_1')_U Y_U.
$$
Since $\psi_1' = \left(\Pi - \frac{ee^T}{e^T e}\right) w$, we have
\begin{eqnarray*}
X_{\psi_1'} &=& \sum_{\substack{U \subseteq [n-1]\\ |U|\leq 4}} \sum_{\substack{V \subseteq [n-1]\\ |V|\leq 4}} \left(\Pi - \frac{ee^T}{e^Te}\right)_{U,V} w_V Y_U \\
&=& \sum_V w_V \left(\sum_U \left(\Pi - \frac{ee^T}{e^Te}\right)_{U,V} Y_U\right).
\end{eqnarray*}
By Theorem \ref{thm:matgauss}, $\|X_{\psi_1'}\|$ is roughly bounded by $(\|\Sigma_X\|\log n)^{1/2}$ where 
$$
\Sigma_X := \sum_V \left(\sum_U \left(\Pi - \frac{ee^T}{e^Te}\right)_{U,V} Y_U\right)^2.
$$

\begin{proposition}
For each $I, J \in \binom{[n-1]}{\leq 2}$, the $(I,J)$ entry of $\Sigma_X$ only depends on $|I|$, $|J|$ and $|I \cap J|$, i.e., $\Sigma_X$ is in the algebra $\cA$. 
\end{proposition}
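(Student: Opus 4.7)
The plan is to show that $\Sigma_X$ commutes with the natural $S_{n-1}$-action on its index set $\binom{[n-1]}{\leq 2}$, which by the standard orbit classification of pairs $(I,J)$ of subsets is equivalent to $(\Sigma_X)_{I,J}$ depending only on $(|I|,|J|,|I\cap J|)$. For each $\pi\in S_{n-1}$, let $R_\pi^{(k)}$ denote the permutation matrix on $\RR^{\binom{[n-1]}{\leq k}}$ implementing the induced action on subsets of size at most $k$; a matrix $M$ of this size lies in the appropriate Terwilliger-type subalgebra if and only if $R_\pi^{(k)} M (R_\pi^{(k)})^{-1}=M$ for every $\pi$.

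First I verify that $P:=\Pi-ee^T/(e^Te)$ lies in $\cA$. The paper already notes $\Pi\in\cA$. Since $e_S=\Pi_{S,\emptyset}$ depends only on the triple $(|S|,0,0)$, i.e.\ only on $|S|$, the rank-one matrix $ee^T$ has entries $e_Ue_V$ depending only on $(|U|,|V|)$, so $ee^T\in\cA$, and therefore $P\in\cA$. Equivalently, $P_{\pi(U),\pi(V)}=P_{U,V}$ for every $\pi\in S_{n-1}$. Next, a direct unwinding of $(Y_U)_{I,J}=[I\oplus J=U]$ together with $\pi^{-1}(I)\oplus\pi^{-1}(J)=\pi^{-1}(I\oplus J)$ gives the key compatibility
\[
R_\pi^{(2)}\, Y_U\, (R_\pi^{(2)})^{-1} \;=\; Y_{\pi(U)}.
\]

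Combining these ingredients, I compute
\begin{align*}
R_\pi^{(2)}\,\Sigma_X\,(R_\pi^{(2)})^{-1}
&= \sum_V \Bigl(\sum_U P_{U,V}\,Y_{\pi(U)}\Bigr)^{\!2}
 = \sum_V \Bigl(\sum_U P_{\pi^{-1}(U),V}\,Y_U\Bigr)^{\!2} \\
&= \sum_V \Bigl(\sum_U P_{U,\pi(V)}\,Y_U\Bigr)^{\!2}
 = \Sigma_X,
\end{align*}
where I successively relabel $U\mapsto\pi^{-1}(U)$, apply $P\in\cA$ to shift the permutation from the first index of $P$ to the second, and relabel $V\mapsto\pi^{-1}(V)$. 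Hence $\Sigma_X$ is fixed by conjugation with every $R_\pi^{(2)}$, establishing the claim.

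The only subtle point is the bookkeeping between two index sets of different size: the coefficients $P_{U,V}$ live on $\binom{[n-1]}{\leq 4}$, while $\Sigma_X$ and each $Y_U$ are matrices on $\binom{[n-1]}{\leq 2}$. The whole argument hinges on the intertwining identity $R_\pi^{(2)} Y_U (R_\pi^{(2)})^{-1}=Y_{\pi(U)}$, which correctly transfers the $S_{n-1}$-action on the outer indices of $Y_U$ to the subscript $U$, so that $S_{n-1}$-invariance of $P$ on $\binom{[n-1]}{\leq 4}$ upgrades to $S_{n-1}$-invariance of $\Sigma_X$ on $\binom{[n-1]}{\leq 2}$.
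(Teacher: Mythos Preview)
Your proof is correct and follows essentially the same strategy as the paper: both arguments reduce to the permutation invariance of $P=\Pi-ee^T/(e^Te)$ and conclude that $\Sigma_X$ inherits this invariance. The only cosmetic difference is that the paper first collapses $\Sigma_X$ to $\sum_{U_1,U_2}P_{U_1,U_2}Y_{U_1}Y_{U_2}$ using $P^2=P$ and reads off the explicit entry formula $(\Sigma_X)_{I,J}=\sum_{K}P_{I\oplus K,\,J\oplus K}$ before checking invariance, whereas you bypass this simplification and work directly via the intertwining relation $R_\pi^{(2)}Y_U(R_\pi^{(2)})^{-1}=Y_{\pi(U)}$.
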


\begin{proof}
Note that 
\begin{eqnarray*}
\Sigma_X &=& \sum_V \sum_{U_1, U_2} \left(\Pi - \frac{ee^T}{e^T e}\right)_{U_1,V} \left(\Pi - \frac{ee^T}{e^T e}\right)_{V,U_2} Y_{U_1} Y_{U_2} \\
&=& \sum_{U_1, U_2} \left(\left(\Pi - \frac{ee^T}{e^T e}\right)^2\right)_{U_1,U_2} Y_{U_1} Y_{U_2} \\
&=& \sum_{U_1,U_2} \left(\Pi - \frac{ee^T}{e^T e}\right)_{U_1,U_2} Y_{U_1} Y_{U_2}.
\end{eqnarray*}
Hence,
$$
(\Sigma_X)_{I,J} = \sum_{K \in \binom{[n-1]}{\leq 2}} \left(\Pi - \frac{ee^T}{e^T e}\right)_{I \oplus K, J\oplus K},
$$
which is invariant under any permutation $\pi$ on $[n-1]$ as $\Pi - \frac{ee^T}{e^T e}$ is. It implies that $\Sigma_X \in \cA$. 
\end{proof}

The block-diagonalization of $\Sigma_X$ is $(u_0u_0^T, u_1u_1^T, u_2 u_2^T, 0, 0)$ where 
\begin{eqnarray*}
u_0 &=& \sqrt{\frac{n(n-3)(n-5)}{3n-14}} \begin{bmatrix}
1 & -\frac{1}{\sqrt{n-1}} & -\sqrt{\frac{n-2}{2(n-1)}} & 0 & 0\end{bmatrix}^T \\
u_1 &=& \sqrt{\frac{(n-6)(3n^4-24n^3+59n^2-66n+32)}{2(n-1)(n-2)(3n-14)}} \begin{bmatrix}
1 & -\frac{1}{\sqrt{n-3}} & 0 & 0
\end{bmatrix}^T \\
u_2 &=& \sqrt{\frac{(n-6)(3n^4-24n^3+59n^2-66n+32)}{2(n-1)(n-3)(3n-14)}} \begin{bmatrix} 1 & 0 & 0 \end{bmatrix}^T.
\end{eqnarray*}

Hence, $\|\Sigma_X\|$ is equal to the maximum of $\|u_i\|^2$ among $i=0,1,2$, which is at most $(1/2+o_n(1))n^2$. We get the following result:

\begin{proposition}
If $\epsilon < o_n\left(\frac{1}{n\sqrt{\log n}}\right)$, then with probability $1-o_n(1)$ the moment matrix $X_\psi$ is positive-semidefinite.
\end{proposition}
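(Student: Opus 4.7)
The plan is to reduce the positive-semidefiniteness of $X_\psi$ to a single scalar inequality and then bound its three constituent terms by independent probabilistic arguments. From the decomposition $X_\psi = X_{\psi_0} + \frac{\epsilon}{e^T w}\,X_{\psi_1'}$ and the kernel-inclusion $\ker X_{\psi_0} \subseteq \ker X_{\psi_1'}$ established in the preceding proposition, a sufficient condition for $X_\psi \succeq 0$ is
\begin{equation*}
\frac{\epsilon}{|e^T w|}\,\|X_{\psi_1'}\| \;\leq\; \lambda_{\min,\neq 0}(X_{\psi_0}),
\end{equation*}
so it is enough to control each of the three scalars on the right with probability $1-o_n(1)$.

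First I would apply Theorem \ref{thm:matgauss} to the Gaussian matrix sum $X_{\psi_1'} = \sum_V w_V N_V$, where $N_V = \sum_U \bigl(\Pi - ee^T/(e^Te)\bigr)_{U,V}\,Y_U$. Its matrix variance $\sum_V N_V^2$ is exactly the $\Sigma_X$ just computed, whose spectral norm has been shown to be at most $(1/2+o_n(1))n^2$. Taking $t = Cn\sqrt{\log n}$ for a large enough constant $C$ and noting that the ambient dimension $\binom{n-1}{\leq 2}$ is polynomial in $n$, Theorem \ref{thm:matgauss} yields $\|X_{\psi_1'}\| \leq Cn\sqrt{\log n}$ with probability $1 - n^{-\omega(1)}$.

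For the denominator factor $|e^T w|$, observe that this quantity is Gaussian with variance $\|e\|^2 = e_\emptyset$. Combining $\psi_0 = e/(e^Te)$ with $\|\psi_0\|^2 = 1/(e^Te)$ gives $\|e\|^2 = 1/\|\psi_0\|^2$, and a direct summation over the explicit entries of $\psi_0$ listed earlier shows $\|\psi_0\|^2 \to 27/8$ as $n\to\infty$, so in particular $\|e\|^2 = \Theta(1)$. Gaussian anti-concentration then delivers $|e^T w| \geq \|e\|/h(n)$ with probability $1-o_n(1)$ for any diverging $h(n)$. For the spectral quantity $\lambda_{\min,\neq 0}(X_{\psi_0})$, I would use that $X_{\psi_0}$ is invariant under permutations of $[n-1]$ and therefore lies in the degree-$2$ analogue of the algebra $\cA$; the corresponding analogue of Theorem \ref{thm:blkdiag} block-diagonalizes $X_{\psi_0}$ into three constant-size blocks of orders $3$, $2$, and $1$ whose entries are explicit rational functions of $n$ obtained by contracting $(\psi_0)_{S\oplus T}$ against the scalars $\beta_{s,t,r}^u$. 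A finite check on the limiting blocks as $n\to\infty$ shows every nonzero block eigenvalue is bounded away from zero.

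Combining the three bounds through a union bound, the sufficient inequality becomes $\epsilon \lesssim 1/\bigl(n\,h(n)\sqrt{\log n}\bigr)$ with probability $1-o_n(1)$; for any $\epsilon = o_n\bigl(1/(n\sqrt{\log n})\bigr)$ one may select $h(n)$ growing slowly enough to enforce this. The main technical obstacle is the spectral analysis of $X_{\psi_0}$: although block-diagonalization reduces the problem to constant-size matrices, verifying that no nonzero block eigenvalue collapses asymptotically requires a careful but routine evaluation of the Terwilliger block entries $\beta_{s,t,r}^u$ against the explicit slice-moment values of $\psi_0$. A secondary simplification, which follows from the orthogonality of $e$ and $\Pi - ee^T/(e^Te)$, is that $e^T w$ is independent of $(\Pi - ee^T/(e^Te))w$ and hence of $X_{\psi_1'}$, so the three high-probability events combine cleanly by union bound without correlation loss.
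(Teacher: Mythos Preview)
Your proposal is correct and follows essentially the same route as the paper: reduce to the scalar inequality $\frac{\epsilon}{|e^Tw|}\|X_{\psi_1'}\|\le\lambda_{\min,\neq 0}(X_{\psi_0})$, bound $\|X_{\psi_1'}\|$ by the matrix Gaussian inequality with variance proxy $\|\Sigma_X\|=(\tfrac12+o(1))n^2$, control $|e^Tw|$ by Gaussian anti-concentration, and use $\lambda_{\min,\neq 0}(X_{\psi_0})=\Theta(1)$. Two minor remarks: with a fixed constant $C$ the tail probability from Theorem~\ref{thm:matgauss} is $n^{-\Omega(1)}$ rather than $n^{-\omega(1)}$, and your limiting value of $\|\psi_0\|^2$ should be $15/8$ rather than $27/8$; neither affects the argument, and your block-diagonalization sketch for $\lambda_{\min,\neq 0}(X_{\psi_0})$ in fact supplies more justification than the paper, which simply asserts $\lambda_{\min,\neq 0}(X_{\psi_0})=1-o_n(1)$.
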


\begin{proof}
By theorem \ref{thm:matgauss}, we have
$$
\Pr\left(\|X_{\psi_1'}\| \geq t \right) \leq \binom{n-1}{\leq 2} \cdot e^{-t^2/2\|\Sigma_X\|}.
$$
Let $t = 3n\sqrt{\log n}$. Since $\|\Sigma_X\| \leq (1/2+o(1))n^2$, we have $\|X_{\psi_1'}\| \leq 3n\sqrt{\log n}$ with probability $1-n^{-\Omega(1)}$. On the other hand, note that
$$
\Pr\left(|e^T w| \leq t \right) \leq \frac{t}{\sqrt{2\pi}}.
$$
It implies that $|e^T w| > g(n)$ with probability $1-o_n(1)$ for any $g(n) = o_1(1)$. Thus,
$$
\frac{\|X_{\psi_1'}\|}{|e^T w|}  \lesssim \frac{n\sqrt{\log n}}{g(n)}
$$
almost asymptotically surely. Together with the fact that $\lambda_{\min, \neq 0}(X_{\psi_0}) = 1-o_n(1)$, we have $X_{\psi} \succeq 0$ whenever $\epsilon < \frac{g(n)}{n\sqrt{\log n}}$ for some $g(n) = o_n(1)$. 

\end{proof}

\subsubsection{Putting it all together}

We have constructed a linear functional $\psi$ on the space of $(n-1)$-variate multilinear polynomials of degree at most 4, which satisfies (i) $\psi[1] = 1$, (ii) $\psi$ satisfies $\sum_{i=1}^{n-1} x_i + 1 = 0$, and (iii) $\psi[p^2] \geq 0$ for any $p$ of degree 2. It implies that $\psi$ is a valid pseudo-expectation of degree 4. 

Now, let us compute the evaluation of
$$
g'(x) = \sum_{i,j,k,\ell \in [n]} \bW_{ijk\ell} x_{\{i,j,k,\ell\} \setminus \{n\}}
$$
by the functional $\psi$. Recall that $c$ is the coefficient vector of $g'$ and $w = \Sigma^{-1/2} c$ where $\Sigma = \EE[cc^T]$. We have
\begin{eqnarray*}
\psi[g'] = c^T \psi &=& w^T \Sigma^{1/2} \left(\frac{e}{e^T e} + \frac{\epsilon}{e^T w} \left(\Pi - \frac{ee^T}{e^T e}\right)w\right) \\
&=& \frac{e^T \Sigma^{1/2} w}{e^T e} + \epsilon \cdot \frac{w^T \Sigma^{1/2} \left(\Pi - \frac{ee^T}{e^T e}\right) w}{e^T w}.
\end{eqnarray*}

Note that 
\begin{eqnarray*}
\EE\left[w^T \Sigma^{1/2}\left(\Pi-\frac{ee^T}{e^Te}\right) w\right] &=& \ip{\Sigma^{1/2}\left(\Pi-\frac{ee^T}{e^Te}\right), \EE[ww^T]} \\
&=& \tr\left(\Sigma^{1/2}\left(\Pi-\frac{ee^T}{e^Te}\right)\right),
\end{eqnarray*}
which is at least $(\sqrt{6}/12-o_n(1)) n^4$. Also, $|e^T w| = O(1)$ and $|e^T \Sigma^{1/2} w| = O(n)$ with high probability. Hence, with probability $1-o_n(1)$, we have
$$
\psi[g'] \gtrsim O(n) + \frac{n^4}{n(\log n)^{1/2+o(1)}} \gtrsim \frac{n^3}{(\log n)^{1/2+o(1)}}.
$$

\end{document}